\documentclass[preprint,12pt,authoryear]{elsarticle}

\usepackage{epsfig}
\usepackage{subfigure}
\usepackage{amssymb}
\usepackage{amsmath}
\usepackage{float}
\DeclareMathOperator{\rank}{rank}
\DeclareMathOperator{\nullity}{null}
\usepackage{amsthm}
\usepackage{amsbsy}
\usepackage{amsfonts}
\usepackage{xcolor}
\usepackage{booktabs}
\usepackage[hidelinks]{hyperref}

\newtheorem{theorem}{Theorem}

\newtheorem{proposition}[theorem]{Proposition}
\newtheorem{property}[theorem]{Property}

\newtheorem{definition}{Definition}

\newtheorem{remark}{Remark}

\begin{document}
	
	\begin{frontmatter}
	
	
	\title{
		Lower spectrum of financial correlation matrices: a new perspective on market synchronization  
	}
	
	
	\author[1]{Rosanna Grassi%
	}
	\ead{rosanna.grassi@unimib.it}
	
	\author[2]{Caterina Pastorino\corref{cor1}}
	\ead{caterina.pastorino@unimib.it}
	
	\author[1]{Pierpaolo Uberti}
	\ead{pierpaolo.uberti@unimib.it}
	
	\cortext[cor1]{Corresponding author}
	
	\affiliation[1]{organization={Department of Statistics and Quantitative Methods, University of Milano-Bicocca},
		city={Milan},
		country={Italy}}
	
	\affiliation[2]{organization={Department of Economics, Management and Statistics, University of Milano-Bicocca},
		city={Milan},
		country={Italy}}

	\begin{abstract}
{
In this paper we investigate the information content of the lower part of the spectrum of financial correlation matrices, as a source of information on market synchronization. In a financial context, a classical application of Principal Component Analysis and Random Matrix Theory identifies the largest eigenvalues as indicators of dominant market factors and synchronization patterns. We complement this perspective by showing that the smallest eigenvalues also contain relevant information about the effective structure of financial markets.
The paper presents the methodological proposal and validates its effectiveness through comprehensive real data experiments in both descriptive and predictive settings. 

}
\end{abstract}
			
			
			
	\begin{keyword}
Correlation Matrices \sep Random Matrix Theory \sep Lower Spectrum \sep Market synchronization \sep Diversification
	\end{keyword}
	
	\end{frontmatter}
	
	
	
\section{Introduction}\label{intro}

Financial markets are characterized by alternating periods in which assets evolve independently and periods in which their dynamics become increasingly synchronized. As synchronization increases, the effective dimensionality of the market decreases, reducing the number of independent opportunities for diversification. These structural changes become particularly evident during episodes of severe market distress. The 2008 subprime global crisis clearly showed the effects of a large-scale market synchronization, and this reinforced the need for quantitative tools capable of characterizing its evolution. 
Driven by the growing interconnectedness of global financial markets (\cite{26.diebold2014network}), asset returns have become increasingly prone to collective co-movement (\cite{lavin2021network, magner2022modeling}). When market synchronization intensifies, reduced diversification accelerates shock propagation across the market, driving financial contagion (\cite{nguyen2026community}) and systemic crises (\cite{alexandre2021drivers, wei2025self}). 
These recurring episodes have stimulated an extensive body of literature on systemic risk assessment and monitoring has emerged; for comprehensive reviews, we refer to \cite{14.Bisias, 45.rodriguez2013systemic, 47.silva2017analysis,benoit2017risks,bardoscia2021physics}. Within this wide context, understanding how market synchronization evolves has become an important research direction.

Since this phenomenon is related to the correlation structure among assets, the spectral analysis of correlation matrices has become one of the main quantitative approaches to investigate the collective market dynamics. A large part of the literature investigated the relevant economic information contained in the largest eigenvalues. In this direction, an important role is played by Random Matrix Theory (RMT), used as a benchmark to distinguish meaningful spectral information from noise in financial correlation matrices (\cite{Laloux1999}). Other studies mainly focused on the largest eigenvalues, interpreting them as dominant market factors or market mode (\cite{plerou2002random, junior2012correlation, Garlaschelli, lux2020analysis}). 
Moving from this interpretation, the information contained by the higher eigenvalues has been exploited to construct several quantitative indicators to track changes in  collective market behaviors.
Among these approaches, Principal Component Analysis (PCA) has become one of the most widely adopted frameworks. Indeed, the first principal components are interpreted as common market factors driving co-movements of asset returns. These indicators have been applied in different contexts, including the measurement of systemic risk (\cite{13.Billio, 53.zheng2012changes,52.zhang2020global}).
These approaches share a common aspect, that is they detect economically relevant information almost exclusively focusing on the upper part of the spectrum. In contrast, the opposite side of the spectrum, has received considerably less attention. 

The aim of this work is to investigate the information embedded in the lower part of the spectrum, that is the smallest eigenvalues and their relation with diversification collapse and effective dimensionality reduction. In our framework, Random Matrix Theory plays a different role. Indeed, rather than being used as a benchmark to identify informative eigenvalues of the correlation matrix, it provides a definitive result on their distribution. If the columns of a given matrix are the realizations of $n$ independent and identical distributed
random variables, then its eigenvalues follow the so-called Marchenko-Pastur
distribution (\cite{marchenko1967distribution}). Therefore, the eigenvalues of
the correlation matrix range in a suitable interval, and the presence
of eigenvalues outside the theoretical Marchenko–Pastur support indicates departures from a random-correlation benchmark and signals the presence of
non-trivial dependence structures in the data. 
In the financial context, when many securities move in a similar way, the number of effectively independent investments decreases and
the diversification opportunities reduce. 

To correctly interpret the information included in the lower part of the spectrum, we rely on the concepts of numerical nullity and numerical rank of the correlation  matrix. Indeed, we show that the extreme parts of the spectrum are strongly related to both numerical nullity and rank of the correlation matrix. This provides a natural bridge between the mathematical properties and its financial interpretation.  

It is worth point out that, by RMT, our approach avoids an arbitrary definition of the lower part of the spectrum. The threshold is theoretically determined by the Marchenko–Pastur distribution, and the corresponding number of small eigenvalues follows naturally from this choice.
According to this pattern, we introduce a lower-spectrum indicator 
based on the number of eigenvalues below the lower Marchenko–Pastur bound. From a geometric perspective, this corresponds to a correlation matrix approaching a low dimensional structure.
The definition arises in a natural way and we prove that the proposed indicator belongs to the class of the so-called Proper Measures of Connectedness (\cite{maggi2020proper}).

To show the effectiveness of the proposed indicator, we test it on three financial datasets, different in both assets type and in size. The first one is the sectoral portfolios constructed by the  S\&P500 index following the global industry classification standard (GICS). In the second database the assets are the stocks constituting the whole index. 
The scope is evaluating the proposed indicator across different scales of market aggregation. The third dataset contains the assets of the Nikkei index, used to support the robustness of our findings.  
To allow a correct applicability of the theoretical conditions, the dimensionality of the datasets is reduced by a clustering procedure. For comparison purposes, we also report the behavior of an upper-spectrum indicator, based on eigenvalues above the upper Marchenko--Pastur bound and closely related to the standard PCA/RMT literature.

Findings definitely confirm the solidity of the proposed theoretical methodology. Indeed, the indicator captures episodes of stress in both the in-sample and out-of-sample analyses. In particular, it reveals interesting predictive capabilities, proving to be more informative than measures constructed exclusively on the upper part of the spectrum.
%
The paper is organized as follows: Section \ref{theo} define the theoretical framework; Subection \ref{measures} provides the definitions of the extreme spectrum indicators, together with the theoretical properties. Section \ref{empirical} is dedicated to the in-sample and out-of-sample empirical applications on real financial data. Finally, Section \ref{conclusions} draws the conclusions.

\section{The theoretical framework}\label{theo}

\subsection{Theoretical background}

This section introduces the theoretical framework underlying the proposed lower-spectrum indicator. We recall the axiomatic framework of Proper Measures of Connectedness (\cite{maggi2020proper}) and Marchenko–Pastur distribution (\cite{marchenko1967distribution}), and the definition of numerical rank and numerical nullity of a matrix (\cite{golub2013matrix}).

Let us denote by ${\rm Mat}_{T \times n}(\mathbb R)$, with $T \geq n \geq 2$, the set of $T \times n$ real matrices. Given $A \in {\rm Mat}_{T \times n}(\mathbb R)$, the {\em rank} of $A$, $\rank(A)$, is the dimension of its column space (or
row space), and the {\em nullity} of $A$, $\nullity(A)$, is the dimension of the kernel of $A$. By the rank-nullity theorem, $\nullity(A)=n-\rank(A)$. Using a standard notation, $\sigma_1(A)\ge \ldots \ge \sigma_n(A) \ge 0$ are the singular values of $A$.  The rank is equal to the number of nonzero singular values of $A$. 
We denote by ${\mathcal M}_{T \times n}$ the subset of ${\rm Mat}_{T \times n}(\mathbb R)$ containing 
all the matrices with full rank, then equal to $n$. \\ 
Let  $A \in {\mathcal M}_{T \times n}$ be the matrix of returns of $n$ risky assets, where the $i^{th}$ column $A^i \in \mathbb R^T$ contains the time series of $T$ returns of the asset $i$, $i=1,\dots, n$.

\begin{definition}[Proper Measure of Connectedness  \citep{maggi2020proper}]\label{properMeasures}
A real-valued function $C: {\mathcal M}_{T \times n} \rightarrow \mathbb R$ is a \emph{Proper Measure of Connectedness (PMC)} 
if it satisfies the following properties \ref{property1}, \ref{property2}, \ref{property3} 				and \ref{property4}:
\end{definition}

\begin{property}\label{property1}
$C(A) \geq 0$, for any $A \in {\mathcal M}_{T \times n}$.
\end{property}
			
\begin{property}\label{property2}
$C(A)$ is invariant for any permutation of the columns of $A \in {\mathcal M}_{T \times n}$.
\end{property}
	
\begin{property}\label{property3}
$C(A)>C(B)$ if and only if $C(\alpha A)>C(\alpha B)$, for any $A, B \in {\mathcal M}_{T \times n}$ and for any real number $\alpha>0$.
\end{property}
			
\begin{property}\label{property4} 
Let $A^1, A^2, A^3 \in {\mathbb R}^{T}$ with \(\|A^2\|=\|A^3\|\) and \(\langle\mathbf{1},A^2\rangle=\langle\mathbf{1},A^3\rangle=0\), where \(\mathbf{1}\) is the vector of ones, $\langle\cdot,\cdot \rangle$ represents the scalar product and $\|\cdot\|$ is the Euclidean norm. 
Let $A_{ij} = (A^i | A^j) \in {\mathcal M}_{T \times 2}$, and $\rho_{ij}$ be the linear correlation coefficient between $A^i$ and $A^j$, $i,j=1,2,3.$
If $|\rho_{12}| \ge |\rho_{13}|$ then $C(A_{12}) \ge C(A_{13})$.
\end{property}
	
We denote by ${\mathcal S}_{T \times n}$ the subset of ${\mathcal M}_{T \times n}$ that contains all the $T \times n$ full rank matrices with standardized columns\footnote{ $S = \left[s_{i,j}\right] \in {\mathcal S}_{T \times n}$ is such that $\frac{1}{T}\sum_{i = 1}^T s_{i,j} = 0$ and $\frac{1}{T}\sum_{i = 1}^T s^2_{i,j} = 1$, for $j = 1, \ldots, n$.}. Notice that, given a matrix of returns $A \in {\mathcal M}_{T \times n}$, it is always possible to construct the correspondent version $S_A \in {\mathcal S}_{T \times n}$ by standardizing the columns of $A$. Let us denote by $\frac{1}{T} S_A'S_A$ the correlation matrix between the assets, where $S'_A$ is the transpose of $S_A$.
Without loss of generality, we directly work with a matrix in ${\mathcal S}_{T \times n}$. 

A classical result in RMT is the following:
\begin{theorem}[Marchenko-Pastur distribution]\label{MPdist} Given $S \in {\mathcal S}_{T \times n}$, if each column $S^i$, for $i = 1, \ldots, n$, contains $T$ realizations of independent and identically distributed random variables, then for $T \rightarrow +\infty$ and $n \rightarrow +\infty$, with $\frac{n}{T} \rightarrow \lambda \in (0,1)$ it holds that
\[
\lambda_1, \lambda_2, \ldots, \lambda_n \in (\lambda_{min}, \lambda_{max})
\]
where $\lambda_{min} = (1-\sqrt{\lambda})^2$, $\lambda_{max} = (1+\sqrt{\lambda})^2$ and $\lambda_1 \geq \lambda_2 \geq \ldots \geq \lambda_n$ are the eigenvalues of $\frac{1}{T} S'S$, written in non-increasing order.  
\end{theorem}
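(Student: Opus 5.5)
This statement is the classical Marchenko--Pastur theorem, which the paper quotes rather than proves. Stated as written (all eigenvalues lie in the open interval) it is really the \emph{no eigenvalues outside the support} result (Bai--Yin, Bai--Silverstein), not only the weak convergence of the empirical spectral distribution. I would therefore prove it in two stages, interpreting ``$\lambda_1,\dots,\lambda_n \in (\lambda_{min},\lambda_{max})$'' asymptotically: almost surely $\lambda_1 \to \lambda_{max}$ and $\lambda_n \to \lambda_{min}$, so that for any $\varepsilon>0$ the whole spectrum eventually lies in $(\lambda_{min}-\varepsilon,\lambda_{max}+\varepsilon)$. I would assume a finite fourth moment, which the standardized-column setting implicitly requires, and first replace the sample standardization by the population one. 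Sample mean-centering is a rank-one perturbation, and rescaling by the sample standard deviations is a diagonal factor $D=I+o(1)$ uniformly in the columns, by a law of large numbers applied jointly with a union bound. Weyl's inequalities then show that neither change moves the extreme eigenvalues asymptotically.

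\textbf{Stage one (bulk).} I would show that the empirical spectral distribution of $\frac1T S'S$ converges to the Marchenko--Pastur law with ratio $\lambda$, using the Stieltjes transform $m_n(z)=\frac1n\operatorname{tr}(\frac1TS'S-z)^{-1}$. The steps are:
\begin{itemize}
\item Truncate the entries, which costs nothing in the limit under finite variance.
\item Apply the Schur complement formula to each diagonal resolvent entry.
\item Use quadratic-form concentration, namely $\frac1T x'Ax-\frac1T\operatorname{tr}A\to0$ for independent $x$.
\end{itemize}
This yields the self-consistent equation $\lambda z m^2+(z+\lambda-1)m+1=0$ in the limit. Inverting that equation gives the density supported exactly on $[\lambda_{min},\lambda_{max}]$. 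This already shows that the spectrum cannot be confined to any strictly smaller set, so the bounds are sharp.

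\textbf{Stage two (edges).} The main obstacle is controlling the extreme eigenvalues, since weak convergence says nothing about a vanishing fraction of outliers. I would use the moment method: bound $\mathbb E\operatorname{tr}\bigl((\tfrac1TS'S)^k\bigr)$ with $k\sim \log n$ growing. The steps are:
\begin{itemize}
\item Expand the trace as a sum over closed bipartite paths.
\item Count the dominant paths, which are non-crossing trees and are counted by Narayana numbers summing to roughly $\lambda_{max}^k$.
\item Show that paths with repeated edges contribute a negligible amount after truncating the entries at $\delta_T\sqrt T$, which is where the fourth moment is used.
\end{itemize}
Markov's inequality and Borel--Cantelli then give $\limsup\lambda_1\le\lambda_{max}$. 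For the smallest eigenvalue I would follow Bai--Yin: write $\frac1TS'S-(1+\lambda)I$ as a sum of off-diagonal products $T_k$ satisfying a recursion, and show $\|\frac1TS'S-(1+\lambda)I\|\le 2\sqrt\lambda+o(1)$. This forces $\lambda_n\ge(1-\sqrt\lambda)^2-o(1)$. Combining this with stage one gives convergence of both extremes to the endpoints, which proves the statement in its asymptotic sense.
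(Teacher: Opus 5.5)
The paper gives no proof of this statement: it quotes it with a citation to Marchenko--Pastur, whose original theorem concerns only the bulk of the spectrum, and it openly applies the result beyond its hypotheses. Your reading is correct, and in fact forced. The claim can only hold in the asymptotic sense you describe, since Tracy--Widom edge fluctuations of order $T^{-2/3}$ put $\lambda_1$ above $\lambda_{max}$ with probability bounded away from zero. Your two stages are the standard Bai--Silverstein / Yin--Bai--Krishnaiah / Bai--Yin route. One small correction: a finite fourth moment is not implied by standardization, which needs only second moments. It is an additional hypothesis, and your route genuinely needs it, since without it $\limsup\lambda_1(\tfrac1T X'X)=\infty$ a.s.

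The genuine gap is in your preliminary reduction, and it sits at exactly the edge the paper cares about. Let $X$ be the population-standardized data and $\bar x\in\mathbb R^n$ its vector of column means, so that centering gives $\tfrac1T X_c'X_c=\tfrac1T X'X-\bar x\bar x'$. Each $\bar x_j$ is of order $T^{-1/2}$, but there are $n\approx\lambda T$ of them, so $\|\bar x\bar x'\|=\|\bar x\|^2\to\lambda$, not $o(1)$.

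Weyl's inequality therefore only yields $\lambda_n\ge(1-\sqrt\lambda)^2-\lambda+o(1)=1-2\sqrt\lambda+o(1)$. This is useless (negative once $\lambda>1/4$), and passing to singular values does not help. At the top edge you are rescued by monotonicity and interlacing, $\lambda_2(\tfrac1T X'X)\le\lambda_1(\tfrac1T X_c'X_c)\le\lambda_1(\tfrac1T X'X)$. At the bottom, however, a negative semidefinite rank-one perturbation can drag one eigenvalue down by as much as $\|\bar x\|^2$. Interlacing only guarantees that at most one eigenvalue escapes below $\lambda_{min}$. Excluding that single outlier is precisely what the statement, and the count $m^-$, requires, so it needs a real argument.

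There are several ways to supply it.

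\begin{itemize}
\item For Gaussian data, use rotation invariance: $X'(I-\tfrac1T\mathbf 1\mathbf 1')X$ has the law of $Y'Y$ with $Y$ a $(T-1)\times n$ Gaussian matrix, and Bai--Yin applies directly.
\item In general, do not treat centering as a perturbation. Treat $I_T-\tfrac1T\mathbf 1\mathbf 1'$ as a degenerate population covariance on the time side and invoke the Bai--Silverstein exact-separation theorem, with dimension $T$, sample size $n$, ratio $1/\lambda>1$ and $H=\delta_1$. Under a finite fourth moment this gives $\lambda_n\bigl(\tfrac1n X'(I-\tfrac1T\mathbf 1\mathbf 1')X\bigr)\to(1-1/\sqrt\lambda)^2$, i.e.\ $(1-\sqrt\lambda)^2$ after multiplying by $n/T$.
\item Alternatively, show with isotropic resolvent estimates that the secular equation $\bar x'(\tfrac1T X'X-zI)^{-1}\bar x=1$ has no root below $\lambda_{min}-\varepsilon$.
\end{itemize}

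The subsequent rescaling by the sample variances $\hat s_j^2$ is harmless, as you say. However, the uniform bound $\max_j|\hat s_j^2-1|\to0$ comes from the Bai--Yin uniform law of large numbers under the fourth moment. It does not follow from Chebyshev plus a union bound, which fails with $n\approx\lambda T$ columns.
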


The theorem states that, if the columns of $S$ are independent and identical distributed, then its eigenvalues range in the interval $(\lambda_{min}, \lambda_{max})$. We exploit this result in Section \ref{empirical}, where, starting from the matrix of returns $A \in {\mathcal M}_{T \times n}$, we construct the correlation matrix $\frac{1}{T} S_A'S_A$, with $S_A \in {\mathcal S}_{T \times n}$.


We underline that 
the assumption $\rank(A)=n$ guarantees the linear independence of the random
variables that generate the columns of $S_A$. Nevertheless, the linear independence and the fact that the columns of $S_A$ all have equal mean and variance, do not imply that the random variables that generate the columns of $S_A$ are independent and identically distributed.   
However, the application of Theorem \ref{MPdist} under these general hypotheses is commonly accepted in the literature (see, among others \cite{yaskov2016, Garlaschelli, lux2020analysis,guerini2023synchronization}), and the Marchenko–Pastur theorem provides a benchmark for identifying deviations from random correlation structures. \\
However, to interpret the information included in the lower part of the spectrum, a link with the algebraic properties of the correlation matrix is needed. This bridge is naturally provided through the concepts of numerical nullity and numerical rank of a matrix.

Given a parameter $\varepsilon > 0$, the {\it numerical $\varepsilon-$ nullity} and the {\it numerical $\varepsilon-$ rank} of a matrix are defined as follows (see \cite{golub2013matrix}).

\begin{definition}\label{numRank}
Let $\varepsilon>0$; the {\it numerical $\varepsilon-$ nullity} and {\it numerical $\varepsilon-$ rank} of a matrix $A \in {\rm Mat}_{T \times n}(\mathbb R)$, denoted by $N_\varepsilon(A)$ and $R_\varepsilon(A)$ respectively, are defined by
\begin{eqnarray*}
N_\varepsilon(A) &:=& \max_{B \in {\rm Mat}_{T \times n}(\mathbb R)} \{\nullity(B) \::\: \|B-A\| \le \varepsilon\}\\
R_\varepsilon(A) &:=& \min_{B \in {\rm Mat}_{T \times n}(\mathbb R)} \{\rank(B) \::\: \|B-A\| > \varepsilon\}.
\end{eqnarray*}

Equivalently, using the singular values of $A$, $N_\varepsilon(A)$ and $R_\varepsilon(A)$ are defined by:
\begin{eqnarray*}
N_\varepsilon(A) &:=& \#\{k \in \{1,\ldots,n\} \:|\: \sigma_k(A) \le \varepsilon\}\\
R_\varepsilon(A) &:=& \#\{k \in \{1,\ldots,n\} \:|\: \sigma_k(A) > \varepsilon\}.
\end{eqnarray*}
\end{definition}

The numerical nullity (the numerical rank) of a matrix $A$ is the number of its singular values that are lower (greater) than a certain threshold parameter $\varepsilon >0$. The extreme parts of the spectrum are then strongly related to both numerical nullity and rank of the correlation matrix.

The role of the parameter $\varepsilon$ deserves a particular discussion.
In general, this parameter identifies the tolerance that discriminates the numbers that are indistinguishable from zero from the ones that are significantly positive. In standard numerical applications, $\varepsilon$ is equal to the machine tolerance, this parameter is arbitrary and needs to be decided. 

One of the main advantages of our proposal is that 
RMT avoid any arbitrariness on the choice of $\varepsilon$.
Indeed, our approach replaces an arbitrary tolerance parameter with a theoretically motivated value derived from RMT, as highlighted in the next section. 

\subsection{Lower-spectrum and upper-spectrum indicators}\label{measures}

As previously said, the random correlation structures are bounded in a theoretical interval by Theorem \ref{MPdist}. Thus, it is worth investigating the information contained in the eigenvalues falling out of this interval. Although the upper side of the spectrum has been extensively studied in the literature, the lower side has not been sufficiently investigated yet. Both sides have a mathematical interpretation through the concepts of numerical $\varepsilon$-rank and $\varepsilon$-numerical nullity. Therefore, we define the following indicators as:

	\begin{definition}\label{epsNullity}
	Let 
	$S_A \in {\mathcal S}_{T \times n}$ and, let $\varepsilon^- = 1-\sqrt{\lambda}$ and $\varepsilon^+ = 1+\sqrt{\lambda}$, where $\lambda$ is defined as in Theorem \ref{MPdist}, then
	\begin{eqnarray}
	m^-=m^-(S_A) := \# \{k \in \{1,\ldots,n\} \: |\: \sigma_k(S_A) \leq \varepsilon^- \}.
	\end{eqnarray}
and
\begin{eqnarray}\label{ktau}
	m^+ =m^+(S_A):= \# \{k \in \{1,\ldots,n\} \: |\: \sigma_k(S_A) > \varepsilon^+\}
\end{eqnarray}

	\end{definition}

Notice that $\varepsilon^-$ and $\varepsilon^+$ in Definition \ref{epsNullity} are not arbitrarily chosen, but they depend on the Marchenko-Pastur distribution	\footnote{Different choices of the parameters 
$\lambda_{min}, \lambda_{max}$ following eventual alternative criteria are possible, but this is beyond the scope of the present research. } 
			
\begin{remark}\label{REM_def3}
As the square matrix $\frac{1}{T} S_A'S_A$ is constructed starting from $A$, the definition of the measures $m^-$ and $m^+$ can be formulated equivalently in terms of the number of the eigenvalues of $\frac{1}{T} S_A'S_A$. Thus,
$m^-$ counts the eigenvalues of the correlation matrix that are smaller than $\lambda_{min}$, while $m^+$ counts the eigenvalues of the correlation matrix larger than $\lambda_{max}$.
\end{remark}

Definitions \ref{numRank} and \ref{epsNullity} are strictly related.
From Definition \ref{epsNullity} 
it is straightforward to observe that, if $\varepsilon^- = \varepsilon^+= \varepsilon$, then $m^+$ and $m^-$ would be respectively the numerical rank and the numerical nullity of $S_A$ of Definition \ref{numRank}. 
Actually, this occurrence is impossible in the framework of RMT since $(\varepsilon^-)^2 = \lambda_{min} < \lambda_{max} = (\varepsilon^+)^2$. 

This allows us giving to the proposed measures a deeper interpretation, as highlighted in the following.
Both indicators identify deviations from the Marchenko--Pastur interval at opposite sides of the spectrum. 
We focus on the low part of the spectrum: the measure $m^-$ counts the number of eigenvalues below the lower Marchenko--Pastur bound. According to the interpretation of the numerical nullity, the presence of several small eigenvalues indicates that 
the correlation matrix approaches a lower-rank structure. As a consequence, groups of assets become increasingly linearly dependent. From a portfolio perspective, this corresponds to a reduction of the effective diversification opportunities during periods of market distress. Note that the upper-spectrum indicator $m^+$ is closely related to the standard PCA and RMT literature, where large eigenvalues are interpreted as dominant market factors, or a market mode, as defined in \cite{Garlaschelli}. In this paper we focus on the lower part of the spectrum, and use $m^+$ mainly as a benchmark and a comparison tool. 	
					
The following proposition shows that both indicators satisfy the axioms of a Proper Measure of Connectedness. This result allows to classify $m^-$ within the connectedness-measurement literature. 

\begin{proposition}\label{propPMC}
				$m^+$ and $m^-$, defined as in Definition \ref{epsNullity}, satisfy the properties of a PMC, $\forall S_A \in {\mathcal S}_{T \times n}$.
\end{proposition}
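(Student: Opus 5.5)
The plan is to reduce everything to the spectrum of the correlation matrix. By Remark \ref{REM_def3}, $m^-$ and $m^+$ count the eigenvalues of $\frac{1}{T}S_A'S_A$ lying below $\lambda_{min}$ and above $\lambda_{max}$. The thresholds $\lambda_{min}=(1-\sqrt{\lambda})^2$ and $\lambda_{max}=(1+\sqrt{\lambda})^2$ depend only on the ratio $n/T$. They are therefore the same for every matrix in ${\mathcal M}_{T\times n}$, and in particular for any two matrices compared in Properties \ref{property3} and \ref{property4}. I will use this throughout and view both indicators as functions of $A$ through $S_A$. I then check Properties \ref{property1}--\ref{property4} one at a time.

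\textbf{Properties \ref{property1} and \ref{property2}.} Property \ref{property1} is immediate, since $m^\pm$ are cardinalities of sets and hence nonnegative integers. For Property \ref{property2}, let $P$ be an $n\times n$ permutation matrix. Standardizing $AP$ standardizes each column separately, so $S_{AP}=S_AP$. The correlation matrix then becomes $P'\bigl(\frac1T S_A'S_A\bigr)P$. Since $P$ is orthogonal, this is a similarity transformation, so the spectrum and hence both counts are unchanged.

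\textbf{Property \ref{property3}.} For $\alpha>0$, each column of $\alpha A$ has mean $\alpha$ times the original mean and standard deviation $\alpha$ times the original one. Hence $S_{\alpha A}=S_A$, and so $m^\pm(\alpha A)=m^\pm(A)$ for all $A$. The equivalence $C(A)>C(B)\iff C(\alpha A)>C(\alpha B)$ follows trivially.

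\textbf{Property \ref{property4}.} This is the only step with real content. Take $n=2$. The correlation matrix of $A_{ij}$ is $\begin{pmatrix}1&\rho_{ij}\\ \rho_{ij}&1\end{pmatrix}$, with eigenvalues $1+|\rho_{ij}|$ and $1-|\rho_{ij}|$. Since $\lambda_{max}>1>\lambda_{min}$, the larger eigenvalue can never fall below $\lambda_{min}$ and the smaller one can never exceed $\lambda_{max}$. Therefore
\[
m^-(A_{ij})=\mathbf{1}\{1-|\rho_{ij}|\le\lambda_{min}\},\qquad m^+(A_{ij})=\mathbf{1}\{1+|\rho_{ij}|>\lambda_{max}\}.
\]
Both are nondecreasing step functions of $|\rho_{ij}|$, with thresholds common to $A_{12}$ and $A_{13}$ because both lie in ${\mathcal M}_{T\times 2}$. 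Hence $|\rho_{12}|\ge|\rho_{13}|$ implies $m^\pm(A_{12})\ge m^\pm(A_{13})$. The normalization hypotheses on $A^2,A^3$ play no role beyond guaranteeing that the comparison is well posed.

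I expect the main obstacle to be Property \ref{property4}, specifically two points of bookkeeping. The first is ensuring the singular-value normalization in Definition \ref{epsNullity} matches the eigenvalue formulation. I will use the convention of Remark \ref{REM_def3}, i.e.\ singular values of $S_A/\sqrt T$. The second is ensuring the Marchenko--Pastur thresholds are not allowed to vary between the two compared matrices. Once these are fixed, monotonicity in $|\rho|$ is immediate from the explicit $2\times2$ spectrum.
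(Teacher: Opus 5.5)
Your proposal is correct and follows the paper's proof essentially step for step. Properties~\ref{property1}--\ref{property3} go through nonnegativity of a count, permutation invariance of the spectrum, and $S_{\alpha A}=S_A$; Property~\ref{property4} goes through the explicit eigenvalues $1\pm|\rho_{ij}|$ of the $2\times 2$ correlation matrix, read via the eigenvalue formulation of Remark~\ref{REM_def3} exactly as the paper does. Your indicator form $m^-(A_{ij})=\mathbf{1}\{1-|\rho_{ij}|\le\lambda_{min}\}$, $m^+(A_{ij})=\mathbf{1}\{1+|\rho_{ij}|>\lambda_{max}\}$ is a compact version of the paper's three-case analysis, and it also avoids that analysis's slips at the equality boundaries.
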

			
\begin{proof}
				Let $S_A\in {\mathcal S}_{T \times n}$ with singular values $\sigma_1(S_A) \ge \ldots \ge \sigma_n(S_A)>0$. Property \ref{property1} immediately follows from Definition \ref{epsNullity}. 
				
				Since the singular values of a matrix are independent from any column permutation, then $m^+$ and $m^-$ satisfy Property \ref{property2}.
				
				To prove Property \ref{property3}, it is sufficient to note that $S_A = S_{\alpha A}$ for each $\alpha \in \mathbb R^+$. A positive rescaling of all the entries of a matrix does not affect the singular values.

				We now prove Property \ref{property4}. By Remark \eqref{REM_def3}, we can refer to the definition in terms of eigenvalues instead of singular values. Consider the returns $A^1, A^2, A^3 \in {\mathbb R}^{T}$  such that  \(\|A^2\|=\|A^3\|\) and \(\langle\mathbf{1},A^2\rangle=\langle\mathbf{1},A^3\rangle=0\) and the correlation matrices:
				\[
				\frac{1}{T}S_{A_{12}}'S_{A_{12}} = \left[\begin{array}{cc} 1 & \rho_ {12}\\ 
					\rho_{12} & 1 \end{array}\right] \quad \quad
				\frac{1}{T}S_{A_{13}}'S_{A_{13}}  = \left[\begin{array}{cc} 1 & \rho_{13}\\ 
					\rho_{13} & 1 \end{array}\right].
				\]	
				Under these assumptions, if $|\rho_{12}| \ge |\rho_{13}|$ we have to prove the following inequalities: $m^-(S_{A_{12}})\geq m^-(S_{A_{13}})$ and $m^+(S_{A_{12}})\geq m^+(S_{A_{13}}).$ \\
				Observe that the eigenvalues of $\frac{1}{T}S_{A_{12}}'S_{A_{12}}$ are $ 1 \pm |\rho_{12}|$ 	
				and those of $\frac{1}{T}S_{A_{13}}'S_{A_{13}}$ are $ 1 \pm |\rho_{13}|$. Moreover:
				
				\[
				1 + |\rho_{12}| \geq 1 + |\rho_{13}| > 1 - |\rho_{13}| \geq 1 - |\rho_{12}|. 
				\]
				Focusing on the measure $m^-$, the threshold $\varepsilon^- $ is equal to $ 1-\sqrt{\frac{2}{T}}$ and, thanks to the assumption $T > n=2$:
				\[
				1 + |\rho_{12}| \geq 1 + |\rho_{13}| > \left(1-\sqrt{\frac{2}{T}}\right)^2=(\varepsilon^-)^2.
				\]
				Thus, we need to compare the threshold $\varepsilon^-$ only with the two smallest eigenvalues. 
				Only three cases are possible:

				\[
				\left\{\begin{array}{lcl}
					1 - |\rho_{12}|>\left(1-\sqrt{\frac{2}{T}}\right)^2 & \textrm{then} & 
					m^-(S_{A_{12}}) = m^-(S_{A_{13})}) = 0   \\ 
					1 - |\rho_{12}| \leq \left(1-\sqrt{\frac{2}{T}}\right)^2 \leq 1 - |\rho_{13}| & \textrm{then} & 
					m^-(S_{A_{12}}) = 1, \quad m^-(S_{A_{13}}) = 0   \\ 
					1 - |\rho_{13}| < \left(1-\sqrt{\frac{2}{T}}\right)^2 & \textrm{then} & 
					m^-(S_{A_{12}}) = m^-(S_{A_{13}}) = 1    
				\end{array}\right..
				\]

				In all the cases $m^-(S_{A_{12}}) \geq m^-(S_{A_{13}})$, concluding that $m^-$ verifies Property \ref{property4}.
				
				An analogous argument can be used for $m^+$. In this case, $\varepsilon^+ = 1+\sqrt{\frac{2}{T}}$ and in all the cases $m^+(S_{A_{12}}) \geq{} m^+(S_{A_{13}})$. This concludes the proof.

\end{proof}

\section{Empirical Analysis}\label{empirical}
			
In this section, we evaluate the performance of the proposed lower-spectrum indicator on real data in both descriptive and predictive settings.
The first subsection provides a brief description of the experiment and data. Then, we present the in-sample and out-of-sample experiments, respectively.

\subsection{Empirical design and datasets}\label{data}

The analysis is performed in a rolling window framework. We start from the matrix $A \in {\mathcal M}_{T \times n}$ containing the time series of $T$ returns of $n$ risky assets and we set the length of the rolling window equal to $w$ (with $w >n$). Then, we compute the eigenvalues of the correlation matrix constructed with respect to the first $w \times n$ entries of $S_A$. More precisely, we extract the $w$ rows from $A$, then we standardize its columns and construct the correlation matrix.\\
For each rolling window, the indicator $m^-$ is computed according to Definition \ref{epsNullity}. Since the window length $w$ and the number of assets $n$ are given, the ratio $\lambda=\frac{n}{w}$ and, consequently, the values of $\lambda_{max}$ and $\lambda_{min}$ remain constant throughout the experiment.  This procedure is repeated shifting the rolling window one day ahead. 
The resulting time series of the measure is then compared with the corresponding market index to investigate how these changes capture different market conditions. Whenever relevant, the corresponding upper-spectrum indicator $m^+$ is also reported as a benchmark.  
The choice of the parameter $w$ is crucial in our framework for many reasons. At first, let us recall that $w > n$ is the necessary condition to apply RMT. Moreover, it identifies a referring time horizon for the duration of the systemic event. A short $w$ permits to compute the measures on the most recent data, the ones that better describe actual market conditions. Moreover, a measure calculated on a short window is expected to be reactive and helpful also in a predictive framework. The main drawback of working with a reactive measure is the potential high number a false signals.
On the opposite, if $w$ is long, the measure is stable, but it can loose the capacity of timing the market, not just in a predictive, but also in a descriptive context. 

We perform our analysis on three different datasets:	
\begin{itemize} 
		\item \textit{SPsector} contains the daily logarithmic returns from 01-03-1990 to 09-17-2020 of the S\&P500 index and its $n=10$ sector sub-indexes based on GICS (Global Industry Classification Standard): financials, information technology, telecommunication services, health care, industrials, consumer discretionary, energy, consumer staples, utilities, materials. 
		\item \textit{SPstocks} includes the daily logarithmic returns of the S\&P500 index and $n=377$ of its constituents from 01-05-2005 to 09-17-2020.
		\item \textit{Nikkei} contains the daily logarithmic returns of the Nikkei225 index and $n = 199 $ of its constituents from 01-05-2005  to 09-17-2020.
\end{itemize}

The choice of the datasets allows us to evaluate the proposed indicator across different scales of market aggregation.
In particular, at the sectoral-level, we investigate the behavior of lower-spectrum indicator across various economic sectors, whereas the stock-level datasets assess whether the proposed indicator also remains informative in higher-dimensional settings. It is worth to note that
the number of stocks in SPstocks and Nikkei is smaller than the nominal number of the index constituents because we restrict the analysis to the stocks with complete time series on the chosen period. 
	
\subsection{In-Sample Analysis} \label{emp_ins}
The aim of the in-sample analysis is to examine how effective dimensionality of the market correlation structure evolves across different market conditions. The lower-spectrum indicator provides the empirical proxy for this evolution.
We use the S\&P500 index as a reference indicator of the market (Sections \ref{esp1} and \ref{esp3}). To support the robustness of our results, the experiment is replicated on the NIKKEI index (Section \ref{esp4}).

\subsubsection{Experiment 1: SPsector, $(n = 10, w = 20)$}\label{esp1}
			
For the first experiment on the SPsector database, we set $w = 20$.  
\begin{figure}[H]
	\begin{center}
\includegraphics[width=0.79\linewidth]{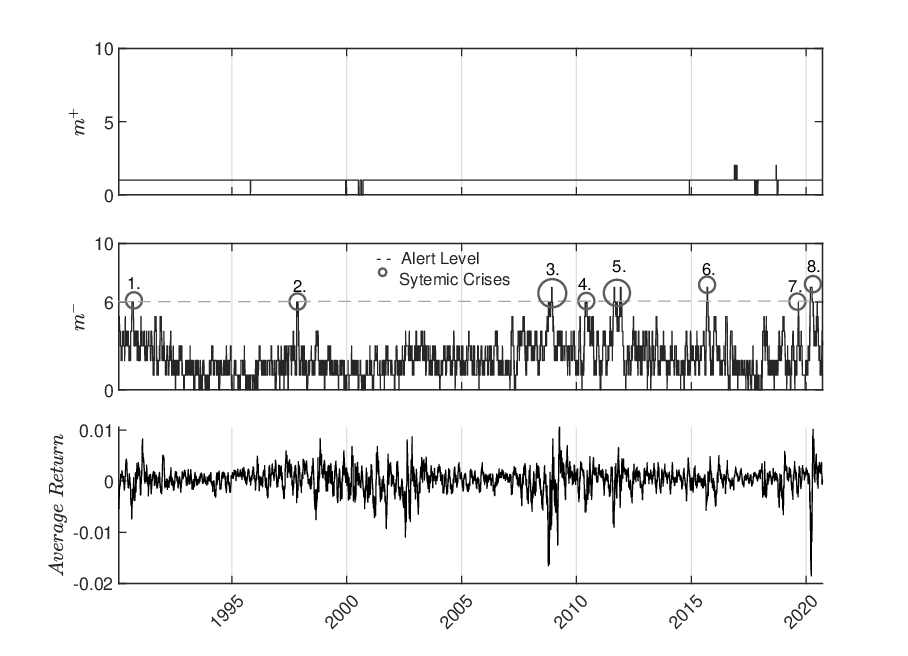}
\caption{Daily average returns of the S\&P500 index from 1990 to 2020 (bottom panel); $m^-$ (central panel) and $m^+$ (top panel) computed on SPsector dataset, $w=20$.  
Circles highlight the peaks of $m^-$ associated with periods of market distress. Global events according to \cite{sumer2023world}: 1. Gulf War or the US Savings and Loans Crisis (1989-1991), 2. East Asian Economic crisis (1997), 3. Subprime crisis (2007-2009), 4. and 5. European Debt crisis (2010-2012), 6.  Greece default to the IMF (2015), see \cite{laeven2020systemic}, 7. to be defined, probably the presence of high volatility in the money market as explained in a FEDS Note, see \cite{anbil2020happened}, 8. COVID-19 crisis. 
}\label{fig1}
\end{center}
\end{figure}

Figure \ref{fig1} depicts the time evolution of the lower-spectrum indicator (central panel) together with the daily average returns of the market index (bottom panel). This allows us to relate changes in the effective dimensionality of the market correlation structure with the events of market distress. 
The measure $m^+$ (top panel) is used as a benchmark. First, we observe that the value of $m^-$ ranges between $0$ and $7$. Small values of $m^-$ correspond to the periods where the market index, SPsector, is stable or grows moderately.

Conversely, high values of $m^-$ indicate that the numerical nullity of the correlation matrix is increasing, hence a significant percentage of the numerical rank of the correlation matrix is reduced. Consequently, the correlation structure approaches a lower-dimensional configuration, reducing the number of effectively independent investments and, therefore, the diversification opportunities. 
The empirical evidence is consistent with the theoretical interpretation developed in Section \ref{theo}.
During the most severe market events, the numerical rank of the correlation matrix decreases from its maximum value of 10 to 3. Consequently, although the market consists of ten sector portfolios, its correlation structure behaves as if only about three effectively independent investments were available, implying a significant reduction in diversification opportunities. 
The historical maximum points of $m^-$ can be compared ex-post with the major events of market distress documented in the literature. In the period under analysis, 
the largest values of $m^-$ are observed during several of the major events of market distress described by \cite{sumer2023world} and \cite{laeven2020systemic}: the recession of the early 90s, probably related to the Gulf war, and US Savings and Loans Crisis (1989-1991); the Asian financial crisis of 1997; the 2007-2009 subprimes crisis; the European debt crisis of 2010; the COVID-19 pandemic.

Observe that we do not have any widely recognized global event in correspondance to the peak identified by the circle $7$ in Figure \ref{fig1}, central panel. Indeed, the proposed approach is unable to discriminate the direction of the market when the correlations increase. Then, we cannot  exclude that a large majority of the assets correlate in a positive direction, as in the case of a speculative bubble. This limitation is expected, since the proposed indicator depends only on the correlation structure and not on the sign of market returns.
An alternative possibility is that some market configurations characterized by a similar correlation structure are actually not associated with widely recognized episodes of financial distress. Indeed, the measurement objectively associates a natural number with a market configuration, described in terms of lower-spectrum of the correlation matrix. 

For comparison purposes, we finally investigate the high part of the spectrum. This comparison highlights the different information conveyed by the extreme parts of the spectrum.
Observe that $m^+$ only takes the values $0$, $1$ and $2$. In the large majority of the periods $m^+ = 1$. This shows that, very often, only one eigenvalue of the correlation matrix is larger than $\lambda_{max}$. Many authors refer to the largest eigenvalue of the correlation matrix as the \textit{market mode}, see, for instance, \cite{Garlaschelli}. The most interesting aspect is that the few situations in which $m^+ \neq 1$ are not correlated with any particular market behavior. Even if $m^+$ is a proper measure of connectedness, it does not provide useful information, at least in the present application. 
	
\begin{figure}[h!]
	\centering
	\begin{minipage}{1\linewidth}
	\centering
	\includegraphics[width=0.48\linewidth]{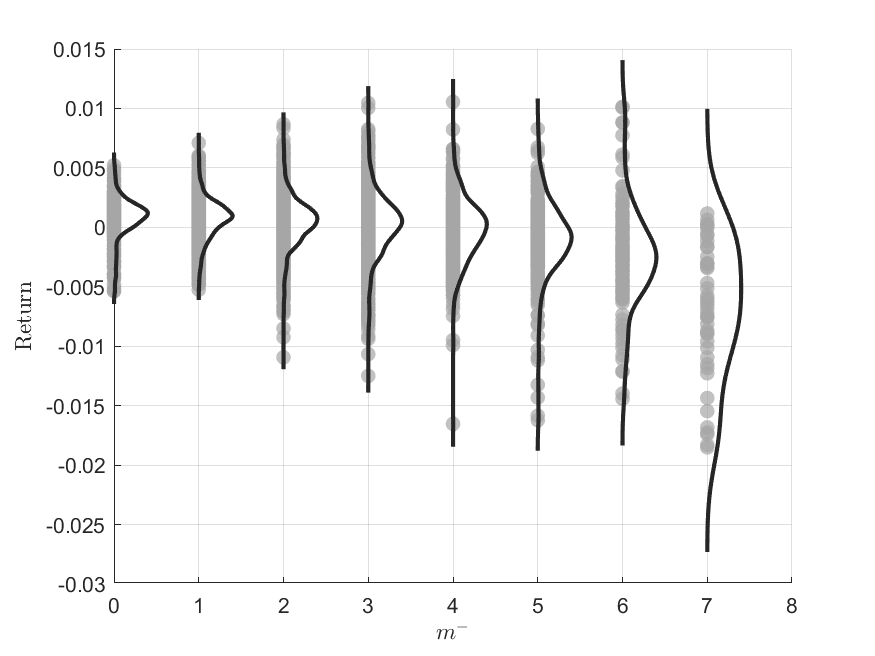}
	\includegraphics[width=0.48\linewidth]{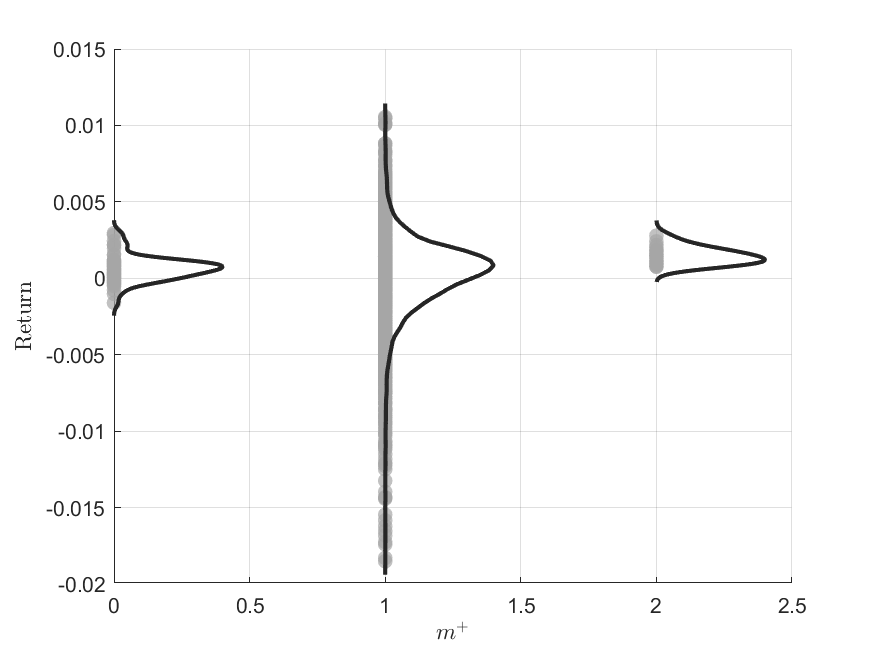}
	\end{minipage}
	\caption{SPsector, $w = 20$, distributions of the index daily average returns conditioned to the values of $m^-$ (left panel) and $m^+$ (right panel).}\label{fig2}
\end{figure}

To provide empirical support to the previous interpretation, we analyze the densities of the distribution of the index average daily returns conditioned to the values of 
$m^-$ (Figure \ref{fig2}, left panel). The distributions are obtained using  MatLab Gaussian kernel smoothing function \textit{ksdensity}, see \cite{peter1985kernel}. 
As the value of $m^-$ increases, the mean of the conditional distributions of returns decreases and the variance increases. This reinforces the interpretation that larger values of the lower-spectrum indicator correspond to market configurations characterized by a substantial reduction in effective diversification opportunities.
As a benchmark, we report the similar analysis by using $m^+$ (right panel), but no clear relationship with the returns of the referring index can be detected.

As already pointed out, the analysis at the sectoral level allows us to investigate the behavior of lower-spectrum indicator across various economic sectors. However, the shortcoming of using the sector portfolios instead of the single stocks is that their returns present a smoother behavior and a less variegate correlation structure, due to their positive and high correlations. 
To have a more realistic feedback about the proposed approach, it is necessary thus extending the analysis in a high dimensional framework, as we will show in the following subsection.

\subsubsection{Experiment 2: SPstocks $(n = 377, w = 20)$}\label{esp3}

In this experiment, we consider the SPstocks dataset to assess whether the proposed indicator still remains informative in a high-dimensional setting. Then, the size of the set of constituents passes from $n=10$ to $n=377$. Maintaining the same ratio $\frac{w}{n}$ as in the previous experiment, described in Section \ref{esp1}, requires setting $w = 754$. In this way, the Marchenko–Pastur thresholds are unchanged and the proposed methodology remains mathematically consistent.\\ 
Its empirical informativeness, however, depends on the compatibility between the estimation window and the temporal scale of the market events under investigation. 
Indeed, the resulting estimation window spans three years of daily observations, reducing the temporal response of the lower spectrum indicator to significant market events.
\begin{figure}[h!]
	\centering	
	\includegraphics[width=0.7\linewidth]{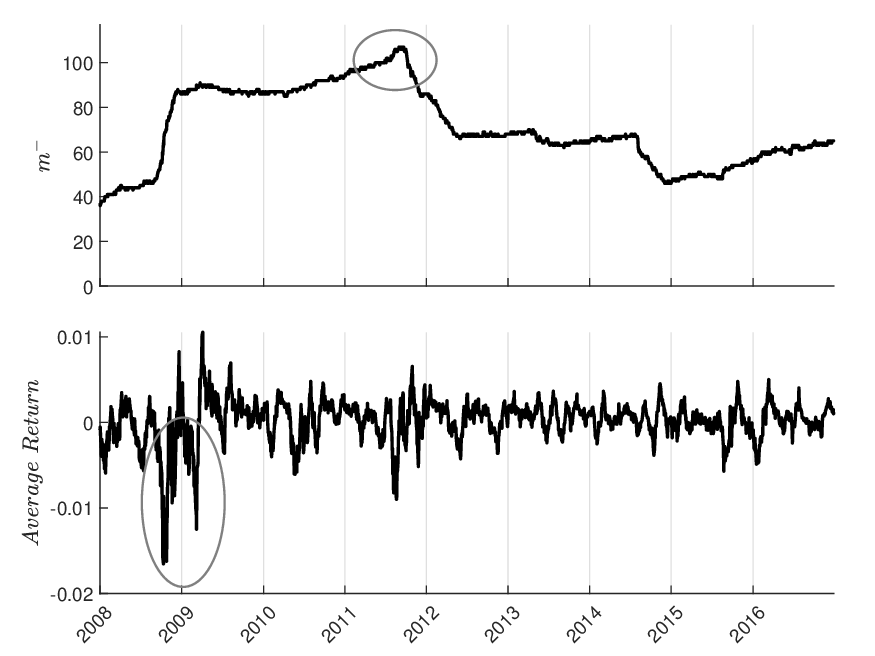}
	\caption{Daily average returns of the S\&P500 index from 01-03-2008 to 30-10-2016 (bottom panel); $m^-$ 
	(top panel) computed on SPstocks, $w = 754$.}\label{fig3}	
\end{figure}
 
 This is evident from the inspection of Figure \ref{fig3}, that compares the market index-- the average returns of the S\&P index -- with the measure $m^-$. The main peak of $m^-$
is observed in 2011, long after the onset of the 2007–2009 financial crisis. This delayed response is the natural consequence of estimating the indicator over such a long rolling window.
More precisely, the application of the proposed methodology to high-dimensional datasets requires satisfying two complementary conditions. On the one hand, the estimation window must be sufficiently short to follow the temporal evolution of the market. On the other hand, the resulting correlation matrix must preserve the theoretical interpretation of the lower spectrum.

To meet the first condition, we set $w=20$. However, for a high-dimensional dataset, as the case of the SPstocks, this implies that $n>w$. Then, the correlation matrix necessarily becomes rank-deficient, and a large number of eigenvalues are zero by construction. As a consequence, the structural zero eigenvalues become indistinguishable from the small eigenvalues associated with the numerical nullity, and the theoretical interpretation of $m^-$ is not longer valid. 
We therefore need to restore the conditions under which the lower part of the spectrum contains only the eigenvalues associated with the numerical nullity. The methodology proposed in Section \ref{theo} must be adapted so that the lower-spectrum indicator reflects only the numerical nullity induced by the correlation structure. In this way, the lower-spectrum indicator recovers its theoretical interpretation developed in Section \ref{theo}.
This can be achieved by reducing the dimensionality of the space of constituents. To this end, we group
the constituents into $k$ clusters adopting one among the standard clustering procedures - where we fix $k$ as an arbitrary value\footnote{For a direct comparison with the experiment in Section \ref{esp1} we set $k = 10$} such that $w >k$ and then by choosing $k$ portfolios each representing the constituents of its respective cluster. We then construct the matrix ${T \times k}$ in which columns are the returns of the aforementioned $k$ portfolios and the resulting reduced representation satisfies the inequality $w>k$.

Operatively, this reduction can be obtained through an agglomerative hierarchical clustering procedure\footnote{Hierarchical clustering is implemented through MATLAB's linkage and cluster functions. These routines construct a hierarchical cluster tree (dendrogram), from which a partition into $k$ clusters is obtained}. The assets are partitioned into $k$ clusters, and one representative portfolio is associated with each cluster. The flexibility of this process allows us to select the clustering level, or scale, that best fits the requirements of the application (in our case, $k=10$). This yields a reduced correlation matrix on which the lower-spectrum indicator can be computed according to the methodology developed in Section \ref{theo}.

As a further step we need to investigate whether the proposed cluster procedure also maintain its empirical effectiveness.
\begin{figure}[H]
	\begin{minipage}{1\linewidth}
		\centering
		\includegraphics[width=0.48\linewidth]{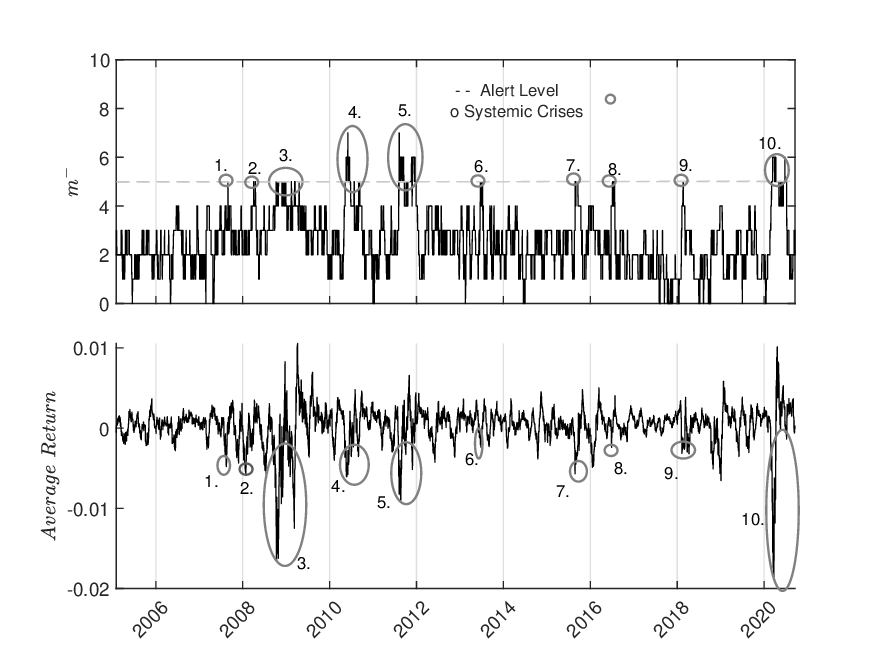}
		\includegraphics[width=0.48\linewidth]{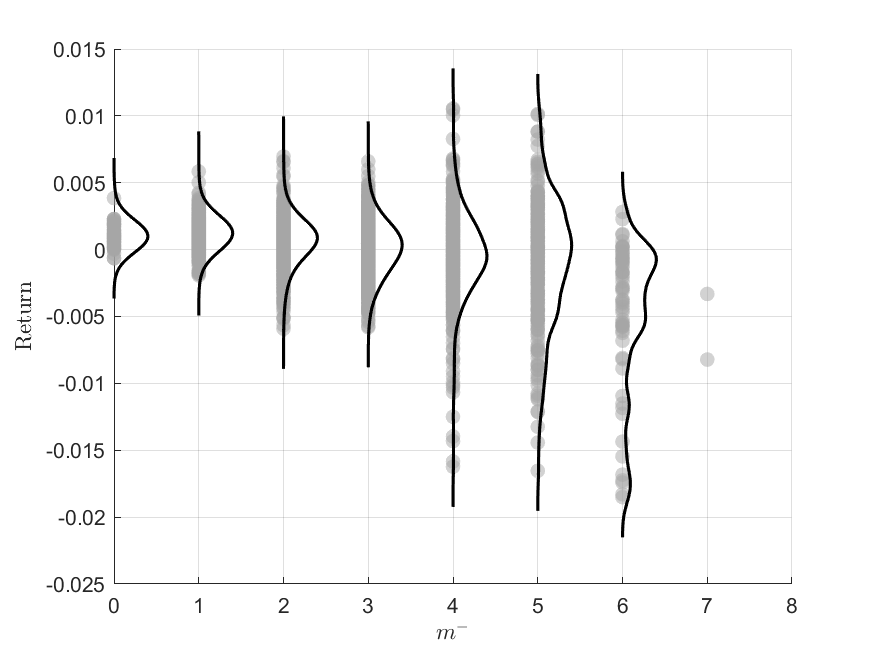}
	\end{minipage}
	\caption{Left panel: SPstocks, $w=20$. Average daily returns of the S\&P500 index and $m^-$ computed on after the reduction through hierarchical cluster tree procedure (number of clusters $k=10$).  Identifiable global crises during the period: 1. 2. and 3. Global Financial crisis 2007/09, 4. and 5. European Debt crisis, 
		7. Greece default in 2015, see \cite{laeven2020systemic}, 8. Brexit shock in 2016, see \cite{qiao2021brexit}, 
		10. COVID-19 crisis. Negative spikes n. 6 and n. 9 are not associated with known systemic events. 
		Right panel: scatter plot w.r.t $m^-$ and the correspondent conditional distributions of returns.}\label{fig5}
	
\end{figure} 
 Figure \ref{fig5} reports the results obtained for the SPstocks dataset with $w=20$. What emerges is that, even after this reduction, the peaks of $m^-$ are associated with the major events of market distress. Using the same interpretation as in the Experiment $1$ (Section \ref{esp1}), and setting the critical threshold to $5$
the measure intercepts the following systemic events: the Global Financial Crisis (2007-2009), the European Debt Crisis (2010-2012), the Greece default (2015), the Brexit effect (2016), and the COVID-19 crisis (2020).
As in the Experiment 1, not every peak can be unequivocally associated with a widely recognized systemic event. This is not in contrast with the interpretation of the indicator, as it measures changes in the correlation structure rather than predefined crisis labels.

Since the financial literature  traditionally focuses on the information carried by the largest eigenvalues of the correlation matrix, it is natural to compare the proposed methodology with an analogous construction based on the upper spectrum, by computation of the measure $m^+$. 
However, as in the previous case, its computation requires a reduction of the dimensionality of the set of constituents\footnote{We underline that the upper part of the correlation matrix spectrum is independent from the implemented dimensionality reduction procedure, as proved in the \ref{appendixA}. Nevertheless, the value of $m^+$ depends on the value $\lambda_max$ that is a function of matrix  dimension.}.
This is not unusual in the literature, and
a natural method to obtaining this reduction is to perform a Principal Component Analysis (PCA) on the original matrix.
Notice that the two indicators require different dimensionality-reduction methods because they rely on complementary parts of the spectrum.

Inspired by the work of \cite{ortobelli2015} we follow this approach to reduce the dimension of the data in order to compute $m^+$. We select the first $k$ principal components of the correlation matrix \footnote{In line with what previously done, we set $k=10$} and therefore we build the corresponding principal portfolios (for a definition of principal portfolio the reader can refer to \cite{Meucci}). We then compute their historical returns, and estimate a linear $k$-funds separation model, where the regressors are the returns of the $k$ principal portfolios (for more details see \cite{ross}). The returns of the original matrix are then approximated using the linear factor model. 
It is important to note that this procedure, as it is based on PCA, focuses only on the first $k$ eigenvalues. Moreover, the procedure of reduction to the first $k$ principal portfolios, and the subsequent reconstruction of the assets as their linear combination implies that the remaining $n-k$ eigenvalues are zero. The mathematical construction of the procedure, together with the proof that the first $k$ eigenvalues are preserved, is reported in \ref{appendixA}. \\
 In Figure \ref{fig4} we report the values of the measure $m^+$ computed by applying the procedure previously described. Compared with Figure \ref{fig5} (left panel), what is evident is that a few eigenvalues are above the upper bound defined by the Marchenko-Pastur benchmark. This is the sign that this measure weakly react to the systemic events. 
In conclusion, the information carried by the lower spectrum is substantially different from that contained in upper part. In this application, the lower-spectrum indicator provides a substantially more informative description of systemic market events than the benchmark measure 
$m^+$.

\begin{figure}[H]
\centering

\centering
\includegraphics[width=0.7\linewidth]{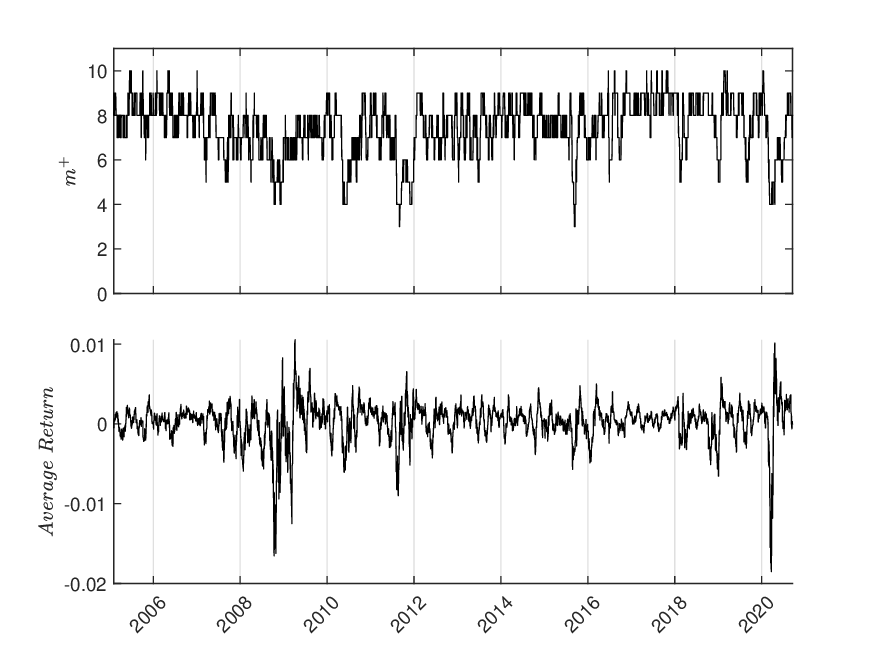}
\caption{SPstocks, $w=20$. Daily returns of the S\&P500 index (bottom panel) and the correspondent value of $m^+$ calculated the PCA procedure (top panel).}\label{fig4}

\end{figure}

\subsubsection{Experiment 3: Nikkei $(n = 199, w = 20)$}\label{esp4}

In this Section, we replicate the experiment using the Nikkei index. As in the SPstocks application, the dimensionality of the space of constituents is reduced in order to preserve the theoretical interpretation of the lower-spectrum indicator.

%
%
\begin{figure}[h!]
	
	\begin{minipage}{1\linewidth}
		\centering
		\includegraphics[width=0.48\linewidth]{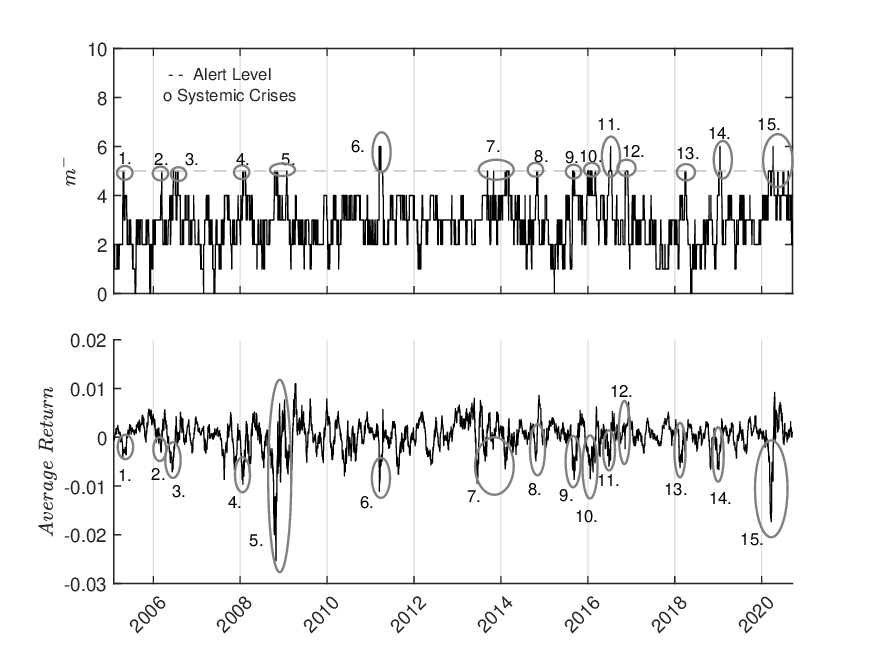}
		\includegraphics[width=0.48\linewidth]{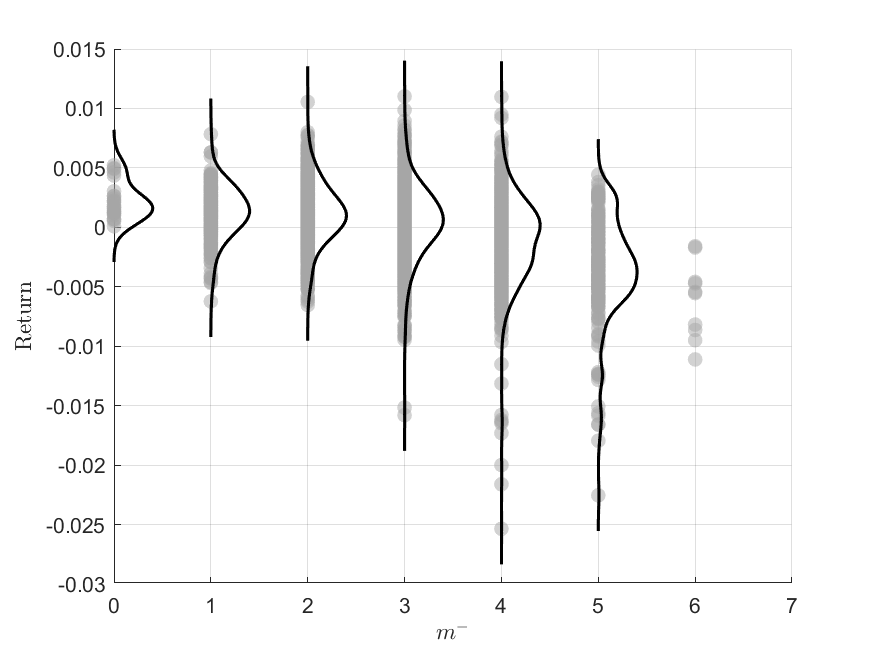}
	\end{minipage}
	\caption{Nikkei database, $w=20$. Daily returns of the NIKKEI index and the values of $m^-$ using hierarchical cluster tree procedure (left panel). Scatter plot w.r.t  $m^-$ and the correspondent conditional distributions of returns (right panel). Left panel systemic crises \cite{sumer2023world}: 4. and 5. Global Financial crisis 2007/09, 6. European Debt crisis 2011, 9. Greece default in 2015, see \cite{laeven2020systemic}, 11. Brexit shock in 2016, see \cite{qiao2021brexit}, 10. COVID-19 crisis. Unidentified events: peaks 1 - 3, 7, 8, 12, 13 and 14}\label{fig8}
\end{figure}

To this end, we set $w = 20$, and compute the measure $m^-$ after applying hierarchical clustering to select $k = 10$ portfolios. Results are reported in Figure \ref{fig8}. In line with the previous empirical experiments, we set the critical value equal to $5$.
The measure $m^-$ reacts to the major systemic events during the period under analysis: the subprime crisis of 2007-09, the European Debt crisis, the Greek default (2015),  the Brexit effect (2016), and the COVID-19 crisis (2020) (Figure \ref{fig8}, left panel). 
Some peaks are also observed during periods that are not immediately associated with recognized systemic events. This phenomenon, that we have previously detected also for the other experiments, is more relevant in this case than in the previous one. This may be partly explained by the low dimensionality of the index, in terms of number of constituents. Also, the index is not international enough to effectively represent a proxy of the global economy. As a consequence,
the events captured by the measure impact the majority of the index assets, but they remain 
confined to the Japanese market, without spreading to the global financial system. Nevertheless, the returns densities conditioned to the values of $m^-$, (Figure \ref{fig8} right panel), confirm the ability of the measure to identify the largest losses of the corresponding index.

\begin{figure}[H]
\centering
\begin{minipage}{1\linewidth}
	\centering
	\includegraphics[width=0.7\linewidth]{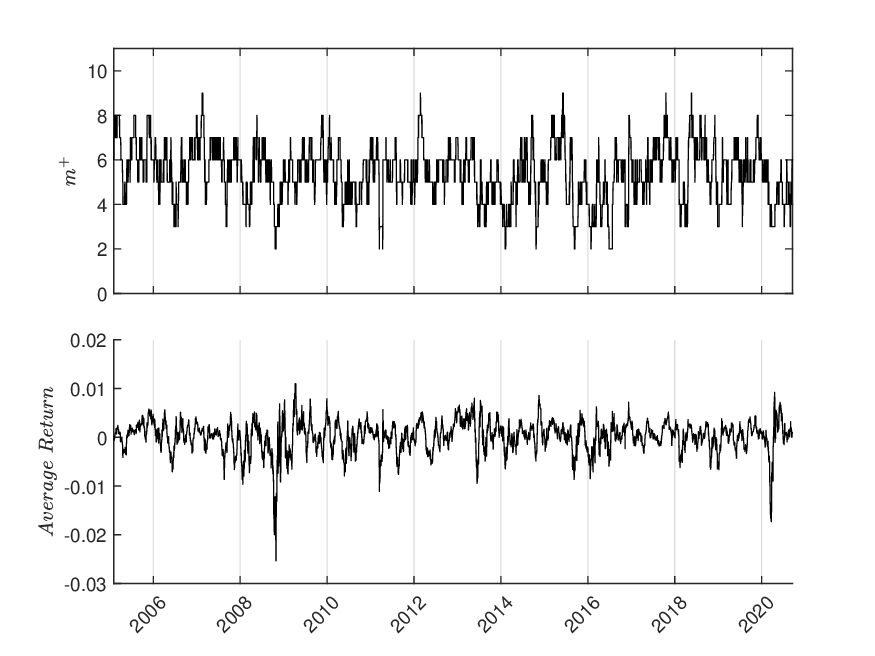}
\end{minipage}
	\caption{Nikkei database, $w=20$. Daily time series of NIKKEI index and $m^+$ using PCA procedure.}\label{fig7}
\end{figure}
For completeness, Figure \ref{fig7} reports the values of the benchmark measure $m^+$ computed through the PCA procedure described in the previous experiment. The behavior of $m^+$ is uninformative for detection of systemic events.

\subsection{Out-of-sample Analysis}\label{emp_oos}

Section \ref{emp_ins} established the descriptive validity of the lower-spectrum indicator. The next step is to investigate whether the information provided by the lower part of the spectrum can be exploited to anticipate future market conditions. This requires an out-of-sample analysis, where the indicator is used without relying on the information contained in the observations to be predicted.
The S\&P500 is still used as a reference indicator of the market, with the same datasets (SPsectors and SPstocks) described in Section \ref{data}.

The 
analysis is still conducted using a rolling-window framework, as described in Section \ref{data}. Specifically, given a window length $w$, the first $w$ observations of the matrix $A$ are used to compute the indicator. Then, starting from time $w+1$, $m^-$ is matched with the future returns of the financial index representing the economic system. 

As in the in-sample analysis, the choice of the correct window length deserves attention also in a predictive framework, and a suitable $w$ should be established. Indeed, $w$ must be large enough to ensure a stable estimate of both the eigenvalue distribution and the ratio $\lambda = \frac{n}{w}$, consistent with the theoretical framework, without reducing the ability of the measure to promptly react to changes in market conditions.
Let us recall that $m^-$ is computed in two steps. 
First, the spectrum is calculated starting from the matrix $A$, then the value of the measure is obtained by counting the number of elements of the spectrum below the interval of the Marchenko-Pastur theorem. In both steps the window length plays a central role, respectively identifying the number of rows extracted from $A$ and the threshold $\lambda$, which is needed to identify the interval.
 
The theoretical framework provides the methodology; however, its application to real financial data requires assessing if the theoretical threshold remains appropriate also in a predictive setting.
To this end, we consider in this analysis exogenous values of $\lambda$ fixed independently of the parameter $w$ to assess the robustness of the results.
For these reasons, in this experiment the past $w$ returns are used to compute the spectrum of $A$, while different values of the threshold $\lambda$ are imposed exogenously.

Figures \ref{fig9} and \ref{fig10} highlight this aspect by computing $m^-$ for $w = 20$ and for different values of $\lambda$, fixed independently of the parameter $w$, and reporting the box plots of the future (one-day-ahead) returns of the index (S\&P500 in this case) conditioned to the values of $m^-$. The top-left panels 
clearly show that $m^-$ has very limited out-of-sample discriminating power when computed with respect to the endogenous threshold; in this case, $\lambda = 0.5$ is the value implied by $w = 20$. Indeed, large losses (and also gains) almost uniformly correspond to all the possible values the measure can take.

\begin{figure}[h!]
\centering

\includegraphics[width=1\linewidth]{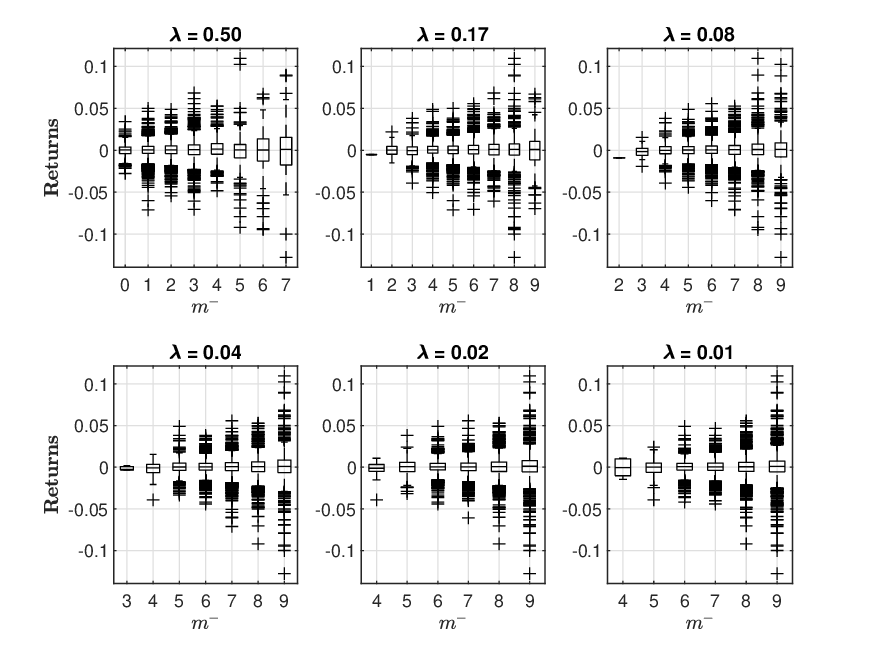}
\caption{SPsector database. Box plots with respect to $m^-$ and the corresponding conditional distributions of future (one-day-ahead) returns of the S\&P500 index.}\label{fig9}
\end{figure}

\begin{figure}[h!]
\centering

\includegraphics[width=1\linewidth]{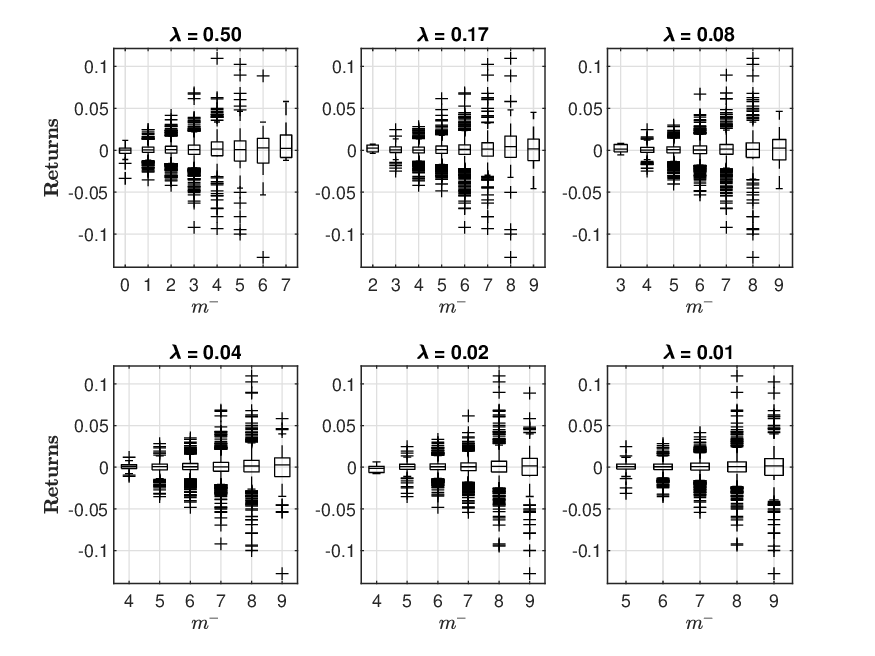}
\caption{SPstock database. Box plots with respect to $m^-$ and the corresponding conditional distributions of future (one-day-ahead) returns of the S\&P500 index.}\label{fig10}
\end{figure}

As shown in Figures \ref{fig9} and \ref{fig10}, for both the SPsector and SPstock databases, the effectiveness of the measure in forecasting large losses becomes clearer for small values of $\lambda$. Indeed, returns decrease when $m^-$ increases, then the predictive performance of $m^-$ improves when using $w = 20$ combined with a value of $\lambda$ that is smaller than the theoretical one. As graphically evident, the best results are obtained for $\lambda = 0.01$ (see the bottom-right panels of Figures \ref{fig9} and \ref{fig10}), where large losses (and also large gains\footnote{Let us underline how, in a predictive framework, $m^-$ seems to be less able to discriminate between large positive and negative gains than in a descriptive context; see Section \ref{emp_ins}.}) correspond to large values of the measure. This result can be explained by the fact that, due to the non-i.i.d. nature of real financial data, the effective value of $\lambda$ is likely to be smaller than the theoretical one. 

In principle, reducing $\lambda$ would require increasing the window length $w$. However, in a forecasting context, a larger $w$ implies estimating the correlation matrix over a longer window, assigning excessive weight to older observations and reducing the responsiveness of the measure to recent market conditions. This trade-off is a standard issue in out-of-sample applications: good predictions require primarily relying on more recent data, while the statistical and mathematical properties of the estimators call for longer time series. 
To address this issue, we introduce 
a decay factor that assigns larger weights to more recent observations and progressively smaller weights to older data. This is a common approach in finance, where
the decay factor quantifies the rate at which information exposures diminish in effectiveness (see \cite{metrics1996jp}). In our application, the decay factor can be interpreted as a persistent parameter following an exponential weighting scheme, where past observations are discounted at a constant rate. Formally, observations receive an exponential weight proportional to a given $\delta$, with $0 < \delta < 1$, which is the decay factor\footnote{The decay factor $\delta$ is the general term of the geometric series $(1-\delta)\sum_{t=0}^{\infty}\delta^t$.}
. Specifically, each observation is weighted by $\delta^t$.

This approach allows us to maintain a long window length $w$, i.e., a small value of the endogenous $\lambda$, while preserving sensitivity to recent information. The decay factor thus reconciles the theoretical requirements of the model with the empirical features of financial data in a forecasting context.

\begin{figure}[H]
	\centering
	\includegraphics[width=0.7\linewidth]{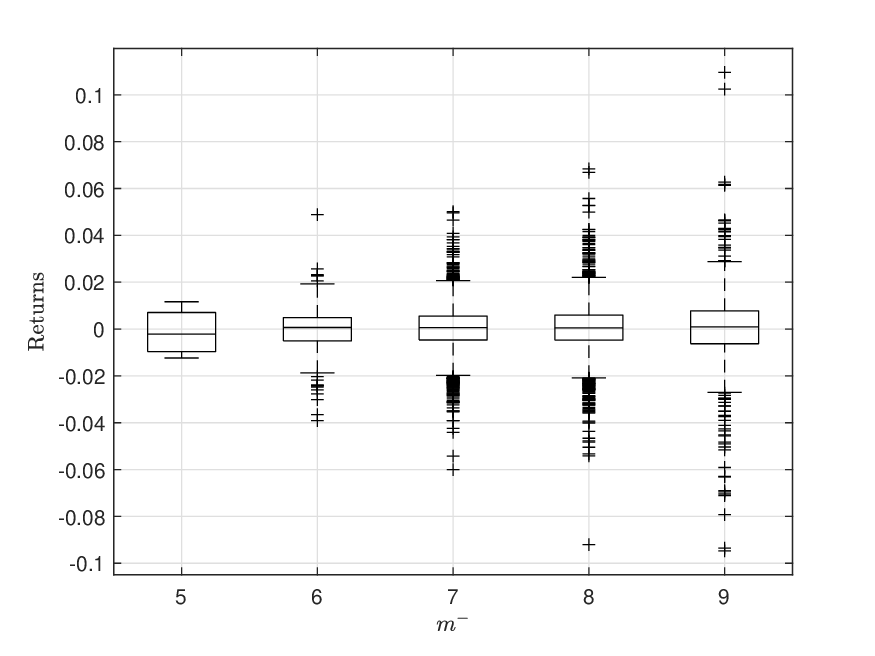}
	\caption{SPsector database. Box plot with respect to $m^-$ and the corresponding conditional distributions of future (one-day-ahead) returns of the S\&P500 index for $\delta=0.7$.}\label{fig11}
\end{figure}

Figure \ref{fig11} shows the results for $w=1000$, $\lambda=\frac{n}{w}=0.01$, and $\delta=0.7$. This value of $\delta$ corresponds approximately to using the most recent $20$ observations to estimate the spectrum. While maintaining the endogenous determination of the threshold, this configuration of parameters preserves the forecasting power of $m^-$, as is clear from Figure \ref{fig11}.

In the following, the experiment is extended to all the remaining databases. 
For each database, $n$ is equal to 10, or reduced to that value using the clustering procedure described in Section \ref{esp3}.

\begin{table}[h!]
	\centering
	\caption{Mean, standard deviation, Value at Risk at the 1\% significance level, and the cardinality of the classes ($N$) of future (one-day-ahead) return distributions conditioned on $m^-$ for different databases. The parameters are $w=1000$, $\lambda=\frac{n}{w}=0.01$, and $\delta = 0.7$.}
	\begin{tabular}{l l c c c c c }
\hline
&& $m^-$ & N & Mean&  Std & V@R1\%  \\
\hline
\multicolumn{7}{l}{SPsector} \\

 && 5  &     5  & -0.0015 &  0.0092 &   0.0124 \\
 && 6  &    352 & 0.0001 &  0.0095 &   0.0276 \\
 && 7  &   2341 & 0.0002  &  0.0099 &   0.0279 \\
 && 8  &   2367 & 0.0003  &  0.0115 &   0.0316 \\
 && 9  &    673 & 0.0001  &  0.0180 &   0.0676 \\
\hline
\multicolumn{7}{l}{SPstock} \\

&& 5 &   1   & -0.0022 &  -     &     -   \\
&& 6 &  109  & 0.0001 & 0.0081 &    0.0288 \\
&&7  &  791  & 0.0005 & 0.0095 &      0.0257\\
&&8  &  834  &0.0004 & 0.0116 &      0.0355\\
&&9  &  218  & 0.0004 & 0.0155 &     0.0453\\
\hline
\multicolumn{7}{l}{NIKKEI} \\
 && 6  &   40  & 0.0009 & 0.0095 &      0.0179 \\
  && 7 &   505 & 0.0000 &   0.0128 &        0.0323 \\
  && 8 &   994 &0.0003 &   0.0147 &      0.0359 \\
  && 9 &   308 & 0.0011 &   0.0176 &   0.0416 \\

\hline
\end{tabular}
\label{tab:oos}
\end{table}
Table \ref{tab:oos} reports the main statistics (mean, standard deviation and value at risk (V@R) at 1\% significance level) of future (one-day-ahead) return distributions conditioned on the values of $m^-$. We group the future returns in dependence on the value of the measure; that is, we report the absolute frequency $N$ of each class respect to $m^-$. Interesting results are common across the different databases. In particular, higher 
levels of $m^-$ are associated with higher levels of both standard deviation and V@R. This fact supports the capacity of $m^-$ to predict future market turbulence.
Moreover, the cardinality $N$ of the classes identified by the possible values of the measure provides very useful information on the discriminating power of the measure.
As highlighted in Table \ref{tab:oos}, the maximum value of the measure corresponds to a relatively small number of observations. In other words, 
$m^-$ is able to efficiently identify risky situations discriminating them with respect to natural market fluctuations. To exemplify, suppose an agent wants to use the measure as an early warning indicator. In this case, a good measure is expected to prevent future losses while not providing the warning signal too often. Indeed, false alarms represent an opportunity cost, requiring hedging against future losses that will not occur. Thus, an ideal early warning system would prevent large losses while minimizing false signals. As already evident 
by inspection of Figures \ref{fig9}, \ref{fig10}, and \ref{fig11}, this is also confirmed by the fact that no notable relationship between the value of $m^-$ and the average return is observed. 
In fact, the present approach is designed for risk detection and is not intended to forecast market direction.

To further support the robustness of our findings, we compare the results with those obtained using the \textit{cumulative risk fraction (CRF)}, a systemic risk measure proposed in \cite{13.Billio}. Since this measure is computed starting from the spectrum of the returns matrix, as are the ones proposed in the present research, the comparison is direct and meaningful.
The CRF relies on the idea that, during crises, a small number of dominant principal components explain a large fraction of market variance. Hence, the risk measure is computed starting from the leading eigenvalues, following an intuition similar to that behind $m^+$.\footnote{Let $\sigma_1, \dots, \sigma_n$ denote the $n$ singular values of matrix $A$, enumerated in non-ascending order. The \textit{cumulative risk fraction} is defined as $CRF^k(A)= \frac{\sum_{j=1}^{k}\sigma_j^2(A)}{\sum_{j=1}^{n}\sigma_j^2(A)}$, for $k=1,\dots,n$. For more details see \cite{13.Billio}.}

Figure \ref{figcrf} (left panel), shows the scatter plot of the future (one-day-ahead) return distribution of the S\&P 500 index with respect to the values of CRF. Let us note that CRF is continuous and can theoretically take values in the interval $(0,1)$, while $m^-$ can only assume integer values. Therefore, to directly compare the two measures we discretize CRF by partitioning its effective codomain, in this case the interval $(0.85,1)$, into 10 intervals of equal length. Then, the forecasting power of CRF is graphically represented similarly to what was done for $m^-$; see Figure \ref{figcrf}, right panel.

	\begin{figure}[h!]
	
	\begin{minipage}{1\linewidth}
		\centering
		\includegraphics[width=0.48\linewidth]{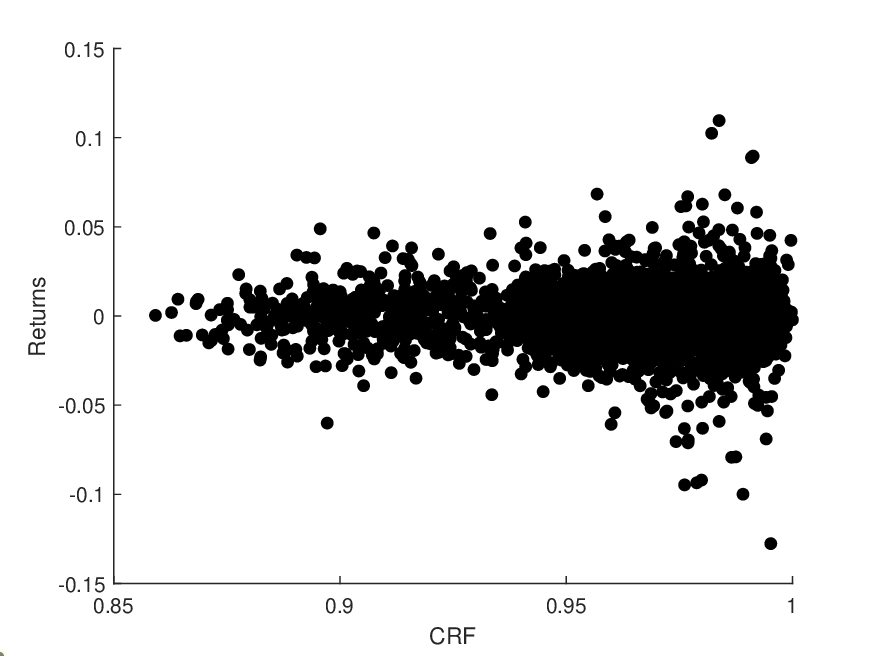}
		\includegraphics[width=0.48\linewidth]{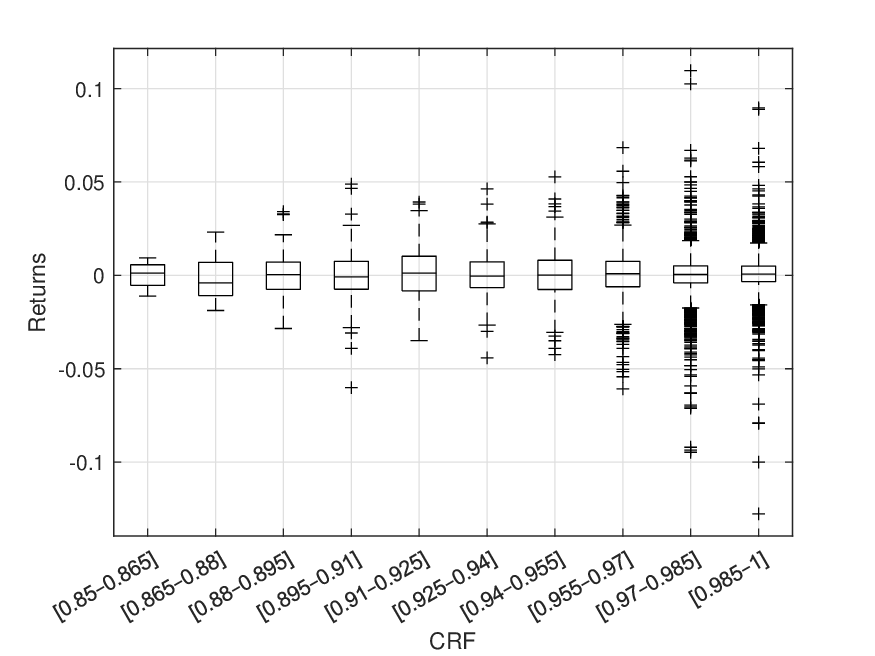}
		\end{minipage}
	\caption{SPsector database. Scatter plot (left panel) and Box plot (right panel) with respect to CRF and the corresponding conditional distributions of future (one-day-ahead) returns of the S\&P500 index.}\label{figcrf}
\end{figure}

Both panels of Figure \ref{figcrf} show that larger values of CRF correspond to large losses (and also large gains). Nevertheless, CRF is less effective in systemic risk detection than $m^-$. This depends on the fact that the classes with higher values of CRF contain a multitude of observations, not just a limited number of extreme returns. Therefore, the discriminating capacity of CRF is lower. 

\begin{table}[h!]
	\centering
	\caption{Mean, standard deviation, Value at Risk at the 1\% significance level, and the cardinality of the classes ($N$) of future (one-day-ahead) return distributions conditioned on CRF for different databases.}
	\begin{tabular}{l l c c c c  c }
\hline
&& $CRF$ & N & Mean& Std &  V@R1\%  \\
\hline
\multicolumn{7}{l}{SPsector} \\
&& [0.85–0.865]  &      4   &  0.0001 &    0.0085 &    0.0112\\
&& [0.865–0.88] &    26    & -0.0025 &     0.0109 &   0.0188\\
&&[0.88–0.895]  &    75   & -0.0003 &     0.0131&   0.0278\\
&&[0.895–0.91]  &   157 &   -0.0004&     0.0141&    0.0385\\
&&[0.91–0.925]  &   158  &    0.0014  &   0.0136 &    0.0313\\
&&[0.925–0.94]  &   140  &   0.0003 &    0.0135 &  0.0314\\
&&[0.94–0.955]  &   474  &    0.0000&     0.0129&    0.0305\\
&&[0.955–0.97]  &   902  &   0.0005&     0.0134  &  0.0363\\
&&[0.97–0.985] &   2716 &    0.0001 &    0.0115 &   0.0342\\
&&[0.985–1] &       3046 &    0.0005&     0.0102&    0.027\\
\hline
   \multicolumn{7}{l}{SPstocks} \\
   
&&[0.69–0.725] &    18   &     -0.0009&     0.0113&    0.0325\\
&&[0.725–0.76]&     96   &  -0.0014&   0.0105&    0.0274\\
&&[0.76–0.795]&    254   &  -0.0003 &   0.0156&    0.0503\\
&&[0.795–0.83]&   531   &  0.0005&   0.0123&    0.0369\\
&&[0.83–0.865]&    891 &   -0.0003&    0.0139&    0.0404\\
&&[0.865–0.9]&     832 &    0.0008&    0.009&      0.0287\\
&&[0.9–0.935]&     822 &    0.0003&    0.0107&     0.0319\\
&&[0.935–0.97]&    388 &    0.0010&    0.0127&    0.0359\\
\hline
\multicolumn{7}{l}{NIKKEI} \\

&& [0.5–0.55] &    45  &   0.0009&    0.0091&    0.0289\\
&&[0.55–0.6]&    238   &  -0.0010&     0.0136&    0.0398\\
&&[0.6–0.65]&    556   &   0.0012&     0.0116&    0.0279\\
&&[0.65–0.7]&    786 &   -0.0002&     0.0116&    0.0293\\
&&[0.7–0.75]&    873&    0.0001&     0.0130&    0.0386\\
&&[0.75–0.8]&    666&    -0.0003&     0.0160&    0.0498\\
&&[0.8–0.85]&    431 &    0.0005&     0.0182&    0.0575\\
&&[0.85–0.9]&    135 &     0.0018&     0.0221&    0.0640\\
&&[0.9–0.95]&     72 &     0.0012&     0.0362&     0.0476\\
&&[0.95–1]&        5  &    0.0074&     0.0262&    0.0166\\
\hline
\end{tabular}
\label{tab:billio}
\end{table}

Table \ref{tab:billio} reports the principal statistics summarizing the application of CRF to the other databases under analysis. Comparing Tables \ref{tab:oos} and \ref{tab:billio}, a few interesting evidences need to be underlined. First, the larger values of standard deviation and V@R do not generally correspond to the larger values of the measure. Let us underline that this fact does not depend on the discretization performed for the comparison. Moreover, the classes of the values of the measure containing larger losses are characterized by a large cardinality. This implies that the measure has limited discriminating power. In other words, implementing an early warning system based on CRF would generate many false signals, with a negative impact in terms of opportunity costs. Finally, as for $m^-$, the average return and the value of CRF show no notable relationship.

To summarize, $m^-$ performs well in a predictive setting, even when compared with CRF, an alternative similar measure proposed in the literature. Moreover, the results suggest that disregarding the smallest eigenvalues of the spectrum entails a loss of relevant information in a forecasting framework. 

\section{Conclusions}\label{conclusions}

This paper deal with the issue of interpreting the information contained in the lower part of the spectrum of financial correlation matrices. The proposed approach aims to be complementary to the traditional one, that looks at the upper part of the spectrum to extract relevant information about the market co-movement.  
The lower spectrum is interpreted through the concepts of numerical nullity, providing a rigorous mathematical interpretation that allows to connect spectral properties and the effective dimensionality of financial markets.
From a financial perspective, the proposed approach suggests that the lower spectrum provides a natural way to describe the progressive reduction of effective diversification opportunities associated with increasing market synchronization.
The interpretation of the lower part of the spectrum opens a new perspective on the way to address a fundamental financial phenomenon such as the market synchronization.
 
%
%

	\section*{Declarations}
	
	
%

	\noindent
	{\bf Data availability.}
	The datasets analyzed in the current study are available from the corresponding author on a reasonable request.
	
\appendix

\section{PCA procedure for upper part of the spectrum}\label{appendixA}

This appendix reports the formal construction of the PCA procedure for the upper part of the spectrum.

Let $V_k \in \mathcal{M}_{n \times k}$ be the matrix obtained applying the PCA to the matrix $S_A$ - that is the matrix with columns given by the first $k$ principal components obtained by the correlation matrix $\frac{1}{T} S_A'S_A$. Notice that by construction, $V_k$ is an orthonormal matrix, i.e. $V_k'V_k=I_k$.
Let us compute $\tilde{X} \in \mathcal{M}_{T \times k}$ by multiplying $S_A \in \mathcal{M}_{T \times n}$ and $V_k\in \mathcal{M}_{n \times k}$, that is $\tilde{X}=S_AV_k$. This means that we are expressing the assets returns as a linear combination of the $k$ principal components.\\
We then perform the regression $\hat{S}_A=\tilde{X}\beta$, where  $\beta=(\tilde{X}'\tilde{X})^{-1}\tilde{X}'S_A$. In this way we obtain a new matrix $\hat{S}_A \in \mathcal{M}_{T \times n}$, of rank $k$, representing the returns, in which every asset is estimated using the regressors of the $k$ principal portfolios.

The following result holds:\\

\begin{proposition}
The matrices $\frac{1}{T} S_A'S_A$ and $\frac{1}{T}\hat{S}_A'\hat{S}_A$ share the same first $k$ eigenvalues. 
\end{proposition}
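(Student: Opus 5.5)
The plan is to make the regression step explicit and show that it is an orthogonal projection onto the span of the top $k$ eigenvectors. Write the spectral decomposition $\frac{1}{T}S_A'S_A = V\Lambda V'$, with $V=(V_k\,|\,V_\perp)$ orthogonal and $\Lambda=\mathrm{diag}(\lambda_1,\dots,\lambda_n)$, $\lambda_1\ge\dots\ge\lambda_n>0$. Here $\lambda_n>0$ because $S_A$ has full rank. Split $\Lambda$ accordingly into $\Lambda_k$ (the first $k$ eigenvalues) and $\Lambda_\perp$. With this notation $\tilde X = S_AV_k$, and the regression gives $\hat S_A = \tilde X(\tilde X'\tilde X)^{-1}\tilde X'S_A = P\,S_A$, where $P$ is the orthogonal projector onto the column space of $\tilde X$.

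Next compute the needed Gram matrices from the eigen-relations. Since $S_A'S_AV_k = T\,V_k\Lambda_k$, we get $\tilde X'\tilde X = V_k'S_A'S_AV_k = T\Lambda_k$, which is invertible because $\lambda_k>0$. We also get $\tilde X'S_A = V_k'S_A'S_A = T\Lambda_kV_k'$. Therefore
\[
\beta = (T\Lambda_k)^{-1}\,T\Lambda_kV_k' = V_k', \qquad \hat S_A = S_AV_kV_k'.
\]

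The product is then
\[
\tfrac{1}{T}\hat S_A'\hat S_A = V_kV_k'\Big(\tfrac{1}{T}S_A'S_A\Big)V_kV_k' = V_kV_k'\,V\Lambda V'\,V_kV_k' = V_k\Lambda_kV_k',
\]
using $V'V_k = (I_k\,|\,0)'$. So $\frac{1}{T}\hat S_A'\hat S_A$ has eigenvectors $V_k$ with eigenvalues $\lambda_1,\dots,\lambda_k$, and eigenvalue $0$ on the complement spanned by $V_\perp$ (with multiplicity $n-k$). Since all $\lambda_i>0$, the $k$ largest eigenvalues of this matrix are exactly $\lambda_1,\dots,\lambda_k$, the first $k$ eigenvalues of $\frac{1}{T}S_A'S_A$. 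This also confirms that $\hat S_A$ has rank $k$ and that the remaining eigenvalues vanish, as claimed in the text.

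The only delicate point is the claim $\beta=V_k'$, which depends on $V_k$ consisting of genuine eigenvectors of $S_A'S_A$ (not just any orthonormal basis), so that $\tilde X'\tilde X$ is diagonal and invertible. Here I would note explicitly that full rank of $S_A$ gives $\lambda_k>0$, which guarantees the inverse exists. If $\lambda_k=\lambda_{k+1}$ there is a tie at the cutoff; then the choice of $V_k$ is not unique, but the argument holds for any orthonormal choice of eigenvectors, and the eigenvalue statement is unaffected.
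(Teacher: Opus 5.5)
Your proof is correct and follows the paper's argument step for step: you compute $\tilde X'\tilde X = T\Lambda_k$ and $\tilde X'S_A = T\Lambda_k V_k'$ to get $\beta = V_k'$, hence $\hat S_A = S_A V_kV_k'$ and $\frac{1}{T}\hat S_A'\hat S_A = V_k\Lambda_kV_k'$. Your extra remarks are small refinements the paper leaves implicit: full rank gives $\lambda_k>0$, which makes $\tilde X'\tilde X$ invertible and makes the nonzero eigenvalues exactly the top $k$, and you also address ties at the cutoff.
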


\begin{proof}

Let $\lambda_1\geq\lambda_2\geq\cdots\geq\lambda_n$ be the eigenvalues of $\frac{1}{T} S_A'S_A$ and 
$\Lambda_k=\operatorname{diag}(\lambda_1,\ldots,\lambda_k)$ the diagonal matrix of the eigenvalues corresponding to the orthonormal matrix $V_k$. By definition of eigenvalues and eigenvectors, we have:
$$
\frac{1}{T} (S_A'S_A)V_k=V_k\Lambda_k
$$
Being $\tilde{X}=S_AV_k$, we have
$$
\tilde{X}'\tilde{X}
=
V_k'S_A'S_AV_k
=
T V_k'(\frac{1}{T} S_A'S_A)V_k
=
T V_k'V_k\Lambda_k
=
T I_k\Lambda_k
=
T\Lambda_k,
$$
and
$$
\tilde{X}'S_A
=
V_k'S_A'S_A
=
T V_k'\left(\frac{1}{T} S_A'S_A\right)
=
T\Lambda_kV_k'.
$$
where $V_k'\left(\frac{1}{T} S_A'S_A\right)=\Lambda_kV_k'$ being the matrix $\frac{1}{T} S_A'S_A$ symmetric. Therefore, by using the previous chain of equalities: 
$$\beta
=
(\tilde{X}'\tilde{X})^{-1}\tilde{X}'S_A
=
(T\Lambda_k)^{-1}T\Lambda_kV_k'
=
\Lambda_k^{-1}\Lambda_kV_k'
=
V_k'.
$$
that is the OLS coefficient matrix $\beta$ is equal to the transpose of matrix $V_k$ collecting the first $k$ principal eigenvectors.
It follows that
$$\hat{S}_A=\tilde{X}\beta=S_AV_kV_k'.$$
Recall that by construction, $rank(\hat{S}_A)=k$, then only $k$ eigenvalues of the matrix $\frac{1}{T}\hat{S}_A'\hat{S}_A$ are different from zero, whereas the remaining $n-k$ are zero.
Hence,
$$\begin{aligned}
\frac{1}{T}\hat{S}_A'\hat{S}_A
&=
\frac{1}{T}
V_kV_k'S_A'S_AV_kV_k' \\
&=
V_kV_k'V_k\Lambda_kV_k' \\
&=
V_kI_k\Lambda_kV_k' \\
&=
V_k\Lambda_kV_k'.
\end{aligned}
$$
Hence, the eigenvalues of
$\frac{1}{T}\hat{S}_A'\hat{S}_A$ are
$\lambda_1,\ldots,\lambda_k$, together with $n-k$ zero eigenvalues.
Therefore, $\frac{1}{T}S_A'S_A$ and
$\frac{1}{T}\hat{S}_A'\hat{S}_A$ share the same first $k$
eigenvalues.
\end{proof}			

%
%
%
%

\bibliographystyle{elsarticle-harv} 

						
						
						
						
						
					\end{document}